\definecolor{DarkGreen}{rgb}{0.1,0.5,0.1}
\definecolor{DarkRed}{rgb}{0.5,0.1,0.1}
\definecolor{DarkBlue}{rgb}{0.1,0.1,0.5}
\newtheorem{theorem}{Theorem} 
\newtheorem{lemma}[theorem]{Lemma} 
\newtheorem{corollary}[theorem]{Corollary}
\newtheorem{proposition}[theorem]{Proposition}
\theoremstyle{definition} 
\newtheorem{definition}[theorem]{Definition}
\newtheorem{remark}{Remark}
\numberwithin{equation}{section}
\def\>{\rangle} 
\def\<{\langle}
\def\Pr{{\rm Pr} }
\def\bu{{\bf u}} 
\def\bv{{\bf v}} 
\def\bx{{\bf x}} 
\def\by{{\bf y}} 
\def\bz{{\bf z}} 
\def\pu{p_{\bf u}}
\def\py{p_{\bf y}} 
\def\pz{p_{\bf z}} 
\def\psiu{\psi_{\bf u}}
\def\psiy{\psi_{\bf y}} 
\def\psiz{\psi_{\bf z}}
\def\QFI{{\rm QFI}}
\def\Pr{{\rm Pr} }
\DeclareMathOperator{\Var}{Var} 
\DeclareMathOperator{\var}{Var} 
\DeclareMathOperator{\wt}{wt}  
\DeclareMathOperator{\tr}{Tr}
\newcommand{\norm}[1]{\left\| #1 \right\|}	
\newcommand{\ket}[1]{\left\lvert #1 \right\rangle}
\newcommand{\bra}[1]{\left\langle #1 \right\rvert}
\NewDocumentCommand\ketbra{+m+g}{%
  \IfNoValueTF{#2}
    {\left\lvert #1 \right\rangle \left\langle #1 \right\vert}
  {\left\lvert #1 \right\rangle \left\langle #2 \right\rvert}%
}
\NewDocumentCommand\braket{+m+g}{%
  \IfNoValueTF{#2}
    {\left\langle #1 \vert #1 \right\rangle}
  {\left\langle #1 \vert #2 \right\rangle}%
}
\newif\ifcomment
\begin{document} 

\title{Describing quantum metrology with erasure errors using weight distributions of classical codes}

\author{Yingkai Ouyang}
\email{oyingkai@gmail.com}
\affiliation{Department of Physics \& Astronomy, University of Sheffield, Sheffield, S3 7RH, United Kingdom}
\affiliation{Department of Electrical and Computer Engineering, National University of Singapore, Singapore 117583}
\affiliation{Centre of Quantum Technologies, National University of Singapore, Singapore 117543}

\author{Narayanan Rengaswamy}
\email{narayananr@arizona.edu}
\affiliation{Department of Electrical and Computer Engineering, University of Arizona, Tucson, AZ 85721, USA}


\thanks{Most of this work was conducted when NR was with the Department of Electrical and Computer Engineering, Duke University, Durham, North Carolina 27708, USA.}

\begin{abstract} 
Quantum sensors are expected to be a prominent use-case of quantum technologies, 
but in practice, noise easily degrades their performance. 
Quantum sensors can for instance be afflicted with erasure errors.
Here, we consider using quantum probe states with a structure that corresponds to classical $[n,k,d]$ binary block codes of minimum distance $d \geq t+1$.
We obtain bounds on the ultimate precision that these probe states can give for estimating the unknown magnitude of a classical field after at most $t$ qubits of the quantum probe state are erased.
We show that the quantum Fisher information is proportional to the variances of the weight distributions of the corresponding $2^t$ shortened codes. 
If the shortened codes of a fixed code with $d \geq t+1$ have a non-trivial weight distribution, then the probe states obtained by concatenating this code with repetition codes of increasing length enable asymptotically optimal field-sensing that passively tolerates up to $t$ erasure errors.
\end{abstract}

\maketitle

\section{Introduction} \label{sec:intro}

Quantum sensors which promise to estimate physical parameters with unprecedented precision have yet to realize their full potential because of decoherence. One approach to combat the effects of decoherence would be to use probe states chosen from quantum error correction codes, and perform active quantum error correction \cite{kessler2014quantum,dur2014improved,arrad2014increasing,unden2016quantum,matsuzaki2017magnetic,Zhou2018,layden2019ancilla,gorecki2019quantum}.
However, in lieu of active quantum error correction protocols, which remain challenging to implement, it is pertinent to understand the extent to which the advantage promised by quantum metrology can persist.
In this paper, we consider using probe states constructed from classical error correction codes, with no requirement of any quantum error correction protocol.
This approach to robust quantum metrology will be compatible with future protocols that are focused on fault-tolerant quantum metrology, since the probe states considered here are code states of quantum CSS codes~\cite{CSS97}. Hence, our strategy serves as a proposal both for near-term resource-limited schemes as well as long-term fault-tolerant architectures.

The research area of quantum metrology is very broad, and in this paper, we focus on the widely studied problem of field-sensing using quantum resources. 
We first describe the field-sensing scheme in the ideal noiseless setting. 
Here, the quantum resource is an $n$-qubit probe state $\rho$.
Mathematically, the action of a classical field that interacts linearly with an ensemble of $n$ qubits can be modelled using the unitary evolution 
\begin{align}
U_\theta = \exp(-i \theta H),
\end{align} 
where $\theta$ is an unknown phase proportional to the magnitude of the classical field strength, and $H$ is the Hamiltonian experienced by the qubits,
given by 
\begin{align}
H = Z_1 + \dots +Z_n,
\end{align} where $Z_j$ denotes the Pauli operator that applies $Z=|0\>\<0|-|1\>\<1|$ on qubit $j$ and the identity operator on all other qubits.
The unitary $U_\theta$ acts on the probe state $\rho$, which evolves into 
\begin{align}
\rho_\theta = U_\theta \rho U_\theta^\dagger.
\end{align} 
Then, an observable $M$ is measured on the state $\rho_\theta$ which depends on $\theta$.
This process is repeated for many preparations of the probe state $\rho$, and the measurement data is recorded.
Based on statistics of the measurement results, we can construct a classical estimator $\hat \theta$ that estimates $\theta$.
From $\hat \theta$, we can infer the magnitude of the classical field.
Typically in quantum metrology, we aim to find a locally unbiased estimator $\hat \theta$ that has the minimum variance $\Var(\hat \theta)$, because in general, the optimal observable $M$ will depend on the true value of $\theta$.

In a noiseless setting, field-sensing using quantum resources does offer a quantum advantage.
Using only classical resources (modelled by separable states), the optimal $\Var(\hat \theta)^{-1}$ scales linearly with the number of spins $n$.
Once we use a quantum probe state (that can have entanglement), there can be a quadratic scaling in the optimal $\Var(\hat \theta)^{-1}$. 
Namely, in a noiseless scenario, the optimal $\Var(\hat \theta)^{-1}$ is equal to $n^{2}$, and is achieved using an $n$-qubit GHZ state $(|0\>^{\otimes n} + |1\>^{\otimes n})/\sqrt 2$ as the probe state. However, relying on the GHZ state to achieve a quantum advantage in field-sensing is vulnerable to noise, because even a single erasure error or phase error renders the GHZ state completely classical.
This begs the question as to what type of probe states can be inherently robust for field-sensing. 

In this paper, we focus on what we call ``robust field-sensing''. 
In this model, noise only afflicts the initial probe state $\rho$ and all other aspects of the quantum sensing protocol remain noiseless. 
Robust field-sensing is a good approximation to the scenario when the majority of the noise occurs during preparation of the probe state, and when one is waiting for the signal to accumulate. 
Here, we consider erasure errors, where we know that qubits labeled by a set $E \subset \{1,\dots, n\}$ have been removed. 
The quantum channel that models erasure errors on $E$ is $\mathcal N_E$, which applies the partial trace on every qubit 
included in $E$. 
 
Erasure errors occur naturally in certain types of qubits, such as in dual rail qubits \cite{Knill2001_N}. A zero state and a one state in a dual rail qubit are represented by $|0_D\> = |0\>|1\>$ and $|1_D\> = |1\>|0\> $ respectively, where $|0\>$ represents the vacuum state and $|1\> $ represents a single photon state. With the loss of a single photon, the $|0_D\>$ state and $|1_D\>$ both can relax to $|0\>|0\>$. Determining whether each dual rail qubit is in the $|0\>|0\>$ state allows us to identify which dual rail qubits have been erased \cite{Knill2001_N}.

When a qubit is defined on the two lowest energy levels of a physical system, there can be leakage \cite{aliferis2007leakage} of the qubit to higher energy levels via energy excitations. The detection of such leakage errors allows us to pinpoint which qubits suffer from these leakage errors, and this can also be interpreted as an erasure error. Physical systems where this can occur include superconducting qubits \cite{werninghaus2021leakage} and atomic systems \cite{vala2005quantum}. Moreover, it has been recently shown how over 99\% of naturally occurring types of errors can also be converted into erasure errors in neutral atom qubits \cite{Wu2022}.

Usually, in coding theory, erasure errors are the first type of noise that one would consider \cite{sloane,nielsen-chuang}.
In view of this, we consider an erasure noise model that acts on quantum probe states in this paper.

GNU codes\cite{ouyang2014permutation}, 
an important subset of permutation-invariant quantum codes \cite{ouyang2015permutation,OUYANG201743,ouyang2019permutation,ouyang2021permutation},
have been studied in the context of being used for field-sensing in the presence of erasure errors \cite{ouyang2019robust}, but unfortunately GNU codes are only a very special family of codes.
In the field of quantum error correction, there are also many other important families of quantum error correction codes, such as CSS codes \cite{CSS97}, which have quantum states whose structure is based on the underlying classical codes.
Hence, the question arises as to how probe states with structure based on classical codes perform under the influence of erasure errors, in lieu of active quantum error correction.

In this paper, we address this gap; we calculate the performance of classical-code-inspired probe states in field-sensing, and show that they can be robust against some erasure errors.

\subsection{Field-sensing as a quantum state estimation problem}

In quantum state estimation \cite{hayashi2017book}, 
we have a known parametrized set of quantum states $\mathcal S = \{\rho_\theta : \theta \in \mathbb R\}$. Given multiple copies of an unknown state $\rho_\theta \in \mathcal S$, the task is to construct an unbiased estimator $\hat \theta$ to estimate $\theta$. 
The estimator is unbiased in the sense that $\mathbb E(\hat \theta) = \theta$.
We furthermore want to have the smallest possible mean squared error (MSE) ${\var(\hat \theta) = \mathbb E( (\hat \theta - \theta)^2)}$ \cite[Chapter 6.4]{hayashi2017book}.
This estimator $\hat \theta$ can be obtained from measuring an observable $M$ on $\rho_\theta$
The associated MSE arises from the error propagation formula and is given by \cite{ku1966notes}, \cite[Eqn. (1)]{sidhu2020geometric}:
\begin{align}
\var (\hat \theta)
    &= 
    \frac{\tr(\rho_{\theta} M^2)- \tr(\rho_{\theta} M)^2 }
    {\left|\frac{\partial}{\partial \theta}   \tr(\rho_{\theta} M) \right|^2},\label{eq:error-propagation-1-noiseless}
\end{align}
which holds in the limit where $\hat \theta \to \theta$.
The error propagation formula is a simple consequence of calculus and properties of the variance function. To see this, given the spectral decomposition $M = \sum_{i} m_i |m_i\>\<m_i|,$ note that by the Born rule, measurement of the state $\rho_\theta$ in the basis $\{|m_i\>: i\}$ collapses the state to $|m_i\>$ with probability $\<m_i| \rho_\theta| m_i\>$, and that the corresponding expected eigenvalue of $M$ is given by
\begin{align}
\mu = \tr(\rho_\theta M) = \sum_i m_i \<m_i| \rho_\theta|m_i\>.
\end{align}
Now, let $\hat \mu$ be an unbiased estimator of $\mu$, and assume that it is equal to $m_i$ with probability $\<m_i| \rho_{\theta} |m_i\>$.
Then it follows that $\mathbb E(\hat\mu) = \tr(\rho_\theta M)$ and \begin{align}
\var(\hat \mu) 
&= \sum_i m_i^2 \<m_i| \rho_{\theta} |m_i\>  
- \left( \sum_i m_i \<m_i| \rho_{\theta} |m_i\> \right)^2 
\notag\\
&= \tr(\rho_{\theta} M^2 ) - \tr(\rho_{\theta} M)^2.
\end{align}
In general, $\mu$ can be expressed as a function of $\theta$, where $\mu = f(\theta)$. 
Similarly, the estimator $\hat \mu$ can be written as a function of the estimator $\hat \theta$, where 
$\hat \mu = f(\hat \theta)$.

Assuming the continuity of $\<m_i| \rho_{\theta}|m_i\>$ as function of $\theta$, it follows that $f$ is continuous with respect to $\theta$. Hence for any $\theta_0$, we can write 
\begin{align}
    \mu = f_0 + f_1 (\theta -\theta_0) + O((\theta-\theta_0)^2).
\end{align}
It follows that 
\begin{align}
    \hat \mu = f_0 + f_1 (\hat \theta -\theta_0) + O((\hat \theta-\theta_0)^2),
\end{align}
and
\begin{align}
    \var(\hat \mu)
    =f_1^2 \var( \hat \theta )
    + O(\var(\hat \theta- \theta_0)^2).
\end{align}
From continuity of $\<m_i| \rho_{\theta}|m_i\>$ with respect to $\theta$, it follows that for all $\theta_0 \in \mathbb R$, we have
\begin{align}
f_1 =
\sum_i m_i
\<m_i| \left.\frac{\partial\rho_{\theta}}{\partial\theta}\right|_{\theta = \theta_0}|m_i\>
= \tr\left( \left. \frac{\partial  \rho_\theta}{\partial \theta}\right|_{\theta = \theta_0} M \right) 
.\label{eq:continuity-of-probability}
\end{align}
The error propagation formula then follows from substituting $\theta_0 = \theta$, which gives
Using \eqref{eq:continuity-of-probability}, it follows that 
 \begin{align}
\Var (\hat \theta  )  
= \Var(\hat \mu)  \tr(M\frac{\partial\rho_{\theta}}{\partial\theta} )^{-2}
+  O(\Var((\hat \theta -\theta)^2)). \label{eq:error-propagation-with-bigO}
\end{align}
This proves that the error propagation formula \eqref{eq:error-propagation-1-noiseless} holds in the limit where $\hat \theta \to \theta$. 

When $\hat \theta$ is a locally unbiased estimator of a fixed $\theta$, 
its minimum MSE can be calculated by solving the following optimization program:
\begin{mini}
{M {\ \rm Hermitian}}{\var(\hat \theta)} {}{}
\addConstraint{ \tr ( \rho_\theta M ) }{ =  \theta}{}.\label{mini:MSE}
\end{mini} 
The estimator $\hat \theta$ is a (globally) unbiased estimator if $\hat \theta$ is locally unbiased for all values of $\theta$, which means that the optimal solution $M$ to this optimization problem is independent of $\theta$. In general, it is not possible to find an optimal $M$ that is independent of $\theta$, and hence it is typical to focus on the scenario where $\hat \theta$ is a locally unbiased estimator.

Now, substituting \eqref{eq:error-propagation-1-noiseless} and rewriting the optimization problem, we get equivalently
\begin{mini}
{M {\ \rm Hermitian}}{    \tr(\rho_{\theta} M^2)- \tr(\rho_{\theta} M)^2 } {}{}
\addConstraint{ \tr ( \rho_\theta M ) }{ =  \theta}{}
\addConstraint{ \tr \left( \frac{\partial \rho_\theta}{\partial \theta} M \right) }{ = 1}{},\label{mini:MSE-nofrac}
\end{mini} 
which has the advantage of being written explicitly as a convex optimization program, with a convex quadratic objective function and linear constraints.
Hence, this program models the quantum state estimation problem. 

To establish the connection between quantum state estimation and field-sensing, we consider a specific family of states $\mathcal S$, namely $n$-qubit states such that for every $\theta \in \mathbb R$, we can write $\rho_\theta = U_\theta \rho U_\theta^\dagger$, where $U_\theta = e^{-i \theta H}$ and $H = Z_1 + \dots + Z_n$.
Since we are locally estimating $\hat \theta$, without loss of generality, we can take $\theta$ to be in the neighborhood of zero.

\subsection{The quantum Cram\'er-Rao bound}

In quantum metrology, the Hermitian operator $L_\theta$, known as the symmetric logarithm derivative (SLD), is defined implicitly via the equation  
\begin{align}
\frac{\partial \rho_\theta}{\partial \theta} = \frac{1}{2}\left(L_\theta \rho_\theta + \rho_\theta L_\theta \right). \label{sld}
\end{align} 
For example, when $\rho$ is a pure state and 
$U_\theta = e^{-i \theta H}$, the corresponding SLD can be written as 
\begin{align}
L_\theta = 2i(\rho_\theta H-H \rho_\theta).\label{eq:perfect-SLD}
\end{align}
While the solution $L_\theta$ to \eqref{sld} is not necessarily unique, the quantity $\tr(\rho_\theta L_\theta^2)$ is well-defined \cite{liu2016quantum}.
This is because $L_\theta$ admits a basis-independent representation as a solution of the Lyapunov differential equation \cite[Eq. (94)]{sidhu2020geometric}.
Because of this, we can define the quantity
\begin{align}
Q(\rho_\theta, \frac{\partial\rho_\theta}{\partial\theta}) \coloneqq \tr (\rho_\theta L_\theta^2 )
\label{eq:qfidef}
\end{align}
based on the SLD, and this quantity depends only on $\rho_\theta$ and $\frac{\partial\rho_\theta}{\partial\theta}$.

The quantity $Q(\rho_\theta, \frac{\partial\rho_\theta}{\partial\theta})$ is equal to what is known as the quantum Fisher information (QFI), which plays a central role in quantum metrology. The QFI can also be interpeted as a metric \cite{wootters1981statistical_QFI,braunstein1994statistical_QFI,fujiwara1995quantum}, which in this case depends on the probe state and the generator $H$. However, evaluating the QFI explicitly requires the full spectral decomposition of the probe state, and may thus be challenging to find in general.
While the QFI can be defined in many different but equivalent ways, we use \eqref{eq:qfidef} for its structural simplicity.  
 
The QFI's importance arises from its role in the celebrated quantum Cram\'er-Rao bound (QCRB), proven by Helstrom and Holevo~\cite{Helstrom68,Helstrom1976,Holevo2011}. 
Namely, the QCRB states that the minimum variance of $\hat \theta$ must satisfy the following inequality  
\begin{align}
\min\{\Var(\hat \theta) : M=M^\dagger, \tr (\rho_\theta M) = \theta\}  \ge Q(\rho_\theta, \frac{\partial\rho_\theta}{\partial\theta})^{-1}. \label{variance-qfi}
\end{align}
For the case of estimating a single parameter, which is the focus of this paper, the QCRB can be tight, which means that the QFI can tell us what the ultimate precision is for quantum metrology. In particular, equality in the QCRB is attained by using a $\theta$-dependent observable $M = \theta {\bf 1} + L_{\theta} / Q(\rho_\theta, \frac{\partial\rho_\theta}{\partial\theta})$\cite[Eq (108)]{sidhu2020geometric}.
For notational simplicity, when it is clear what 
$\rho_\theta$ and $\frac{\partial\rho_\theta}{\partial\theta}$ are from the context, 
we will write $\QFI = Q(\rho_\theta, \frac{\partial\rho_\theta}{\partial\theta})$.

Furthermore, when $\rho_\theta = U_\theta \rho U_\theta^\dagger$, it is easy to see that the $M$ that saturates the QCRB is also locally unbiased and hence a feasible solution to the optimisation problem \eqref{mini:MSE}. 
To see this, note that $\tr(\frac{\partial \rho_\theta}{d\theta}) = 0$, since every $\rho \in \mathcal S$ must have unit trace.  
Also, from the SLD equation we have $\frac{\partial \rho_\theta}{d\theta} = \frac{1}{2}(L_\theta \rho_\theta + \rho_\theta L_\theta)$. 
It follows that 
$\tr(\rho_\theta L_\theta) = \frac{1}{2} \tr(\rho_\theta L_\theta) + \frac{1}{2} \tr(L_\theta\rho_\theta ) =  \tr(\frac{\partial \rho_\theta}{d\theta} )  = 0$.  
This implies that 
\begin{align}
\hat \theta &= \tr(\rho_\theta M) = \tr\left( \rho_\theta \left( \theta {\bf 1} + L_{\theta} / Q\left( \rho_\theta, \frac{\partial\rho_\theta}{\partial\theta} \right) \right) \right) 
\notag\\
&= 
\tr(\rho_\theta  \theta {\bf 1} )
= \theta,
\end{align}
which proves that for the phase estimation problem that we consider, the optimal observable gives rise to a locally unbiased estimator $\hat \theta$ with variance that saturates the QCRB.

Since the QFI can be complicated to analyze analytically in general, we can appeal to some bounds on the QFI. 
A lower bound on the QFI that arises from its relationship with the Bures distance \cite[Lemma 6]{PRX-random-states} is given by
\begin{align}
Q(\rho_\theta, \frac{\partial\rho_\theta}{\partial\theta})
\ge \| [H, \rho] \|_1^2.
\end{align}
From the fact that the 1-norm is lower bounded by the 2-norm, we get
\begin{align}
     Q(\rho_\theta, \frac{\partial\rho_\theta}{\partial\theta})\ge \| [\rho ,{ H}] \|_2^2 
     &=
      2 \tr (\rho^2 {  H}^2) -  2 \tr( \rho {  H} \rho {  H} ) \label{eq:QFI bound}.
\end{align} 
This bound is tight when $\rho $ is a pure state, which corresponds to the case where zero erasures occur. 
The QFI is also at most the variance of the observable $H$ (see \cite[Eq (98)]{sidhu2020geometric}): \begin{align}
    Q(\rho_\theta, \frac{\partial\rho_\theta}{\partial\theta}) \le 
   4 \tr(\rho  {  H}^2) - 4 \tr(\rho  {  H})^2.
   \label{eq:qfi-upperbound-variance}
\end{align}
This bound is tight when $\rho$ is a pure state.
Using \eqref{eq:qfi-upperbound-variance}, when $U_\theta = \exp(-i \theta H)$, the optimal entangled $\rho$ corresponds to the GHZ state $(|0\>^{\otimes n} + |1\>^{\otimes n})/\sqrt{2}$ and has a QFI equal to $n^2$, which achieves the so-called Heisenberg scaling. In comparison, the optimal QFI for separable states is achieved by $|+\>^{\otimes n}$, and has a QFI equal to $n$, which is the standard quantum limit (SQL).

Note that when even a single erasure occurs on the GHZ state, it becomes separable and loses its quantum advantage. Therein lies the question of what probe states have a QFI that is robust against erasure errors.
In this paper, we will construct probe states for robust field-sensing from classical error-correcting codes, and evaluate their corresponding QFI lower and upper bounds under a noise model that introduces a finite set of erasures. 
We emphasize that our protocol does not involve quantum error correction procedures. 

\subsection{Classical codes and their corresponding probe states}

The classical codes that we consider are sets of length $n$ binary strings, called codewords, so these sets are subsets of $\{0,1\}^n$. Given any length $n$ binary classical code $C \subset \{0,1\}^n$, we let $|\psi_C\>$ denote a pure state of the form
\begin{align}
    |\psi_C\> \coloneqq \frac{1}{\sqrt{|C|}} \sum_{\bx \in C} |\bx\> \label{eq:state-defi},
\end{align}
where for ${\bf x}= (x_1,\dots ,x_n) \in \{0,1\}^n$ we define $|\bx\> \coloneqq |x_1\> \otimes \dots \otimes |x_n\>$.
We propose to use $|\psi_C\>$ as the ideal probe state for robust field-sensing. 
By choosing appropriate codes, one might leverage fault-tolerant state preparation techniques in the quantum error correction literature to prepare these fixed states, but we do not discuss this aspect further in this paper.

A classical binary code $C$ is said to be linear if it is a group under the binary addition operator, and is said to be self-orthogonal if $\bx \cdot \by = 0$ for all $\bx,\by\in C$, where $\cdot$ denotes the inner product over the binary field.
An $[n,k,d]$ binary linear code $C$ encodes $k$ information bits into $n$ code bits, and the minimum number of $1$s in any codeword of $C$ is $d$, called its minimum distance.
One important family of quantum codes are CSS codes~\cite{CSS97},
which can be constructed from a pair of classical codes $C_1, C_2$ that satisfy $C_2 \subseteq C_1$. As a special case, a CSS code can be constructed from any binary linear code $C$ that is also self-orthogonal, by setting $C_2 = C$ and $C_1 = C^\perp$, the dual code of $C$.
A short background on classical codes and quantum CSS codes is given in Appendix~\ref{sec:css_codes}. 

When interpreted through the lens of quantum error correction (QEC), if $C$ is taken to be a linear self-orthogonal code, then $|\psi_C\>$ corresponds to the logical $\ket{++\cdots +}$ state of a CSS code by appropriately identifying the CSS code's logical $X$ operators (including the $X$-type stabilizers) from the code $C$.
Similarly, if $C$ is taken to provide only the $X$-type stabilizers, then the above is just the logical $\ket{00\cdots 0}$ 
state~\cite[Section 10.4.2]{nielsen-chuang}.
As a trivial case, the state $|\psi_C\>$ is also the unique code state (up to a global phase) of the CSS code defined by $C_1 = C_2 = C$.

\begin{figure}
  \centering
  \includegraphics[width=0.6\linewidth]{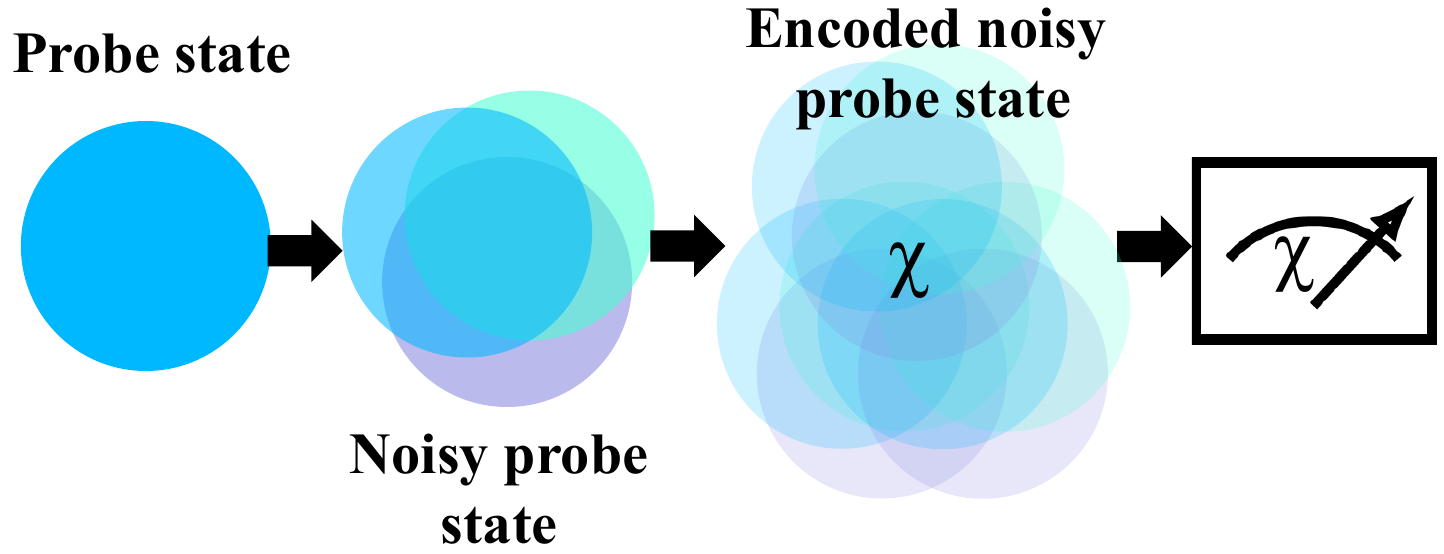}
  \caption{A cartoon sketch of the robust metrology problem. First a probe state is prepared. Then noise occurs on the probe state. Subsequently, the probe state picks up information $\theta$. An estimator $\hat \theta$ of the true parameter $\theta$ is obtained after measuring an appropriate physical observable on the quantum state.} 
  \label{fig:cartoon}
\end{figure}

\subsection{The main result and its implications}  
In this paper, given any classical code $C$ with a minimum distance that is at least $t+1$, we obtain upper and lower bounds on the QFI after any $t$ qubits are erased from the quantum probe state $\rho_C =|\psi_C\>\<\psi_C|$.
When erasure errors occur on a subset $E= \{j_1,\dots, j_t\}$ of qubits in the probe state $\rho_C$, the resultant state is just the partial trace of $\rho_C$ over the qubits labeled by $E$.
We denote this state as $\rho_C[E] = \mathcal N_E(\rho_C)$, where $\mathcal N_E$ denotes an erasure channel on the qubits labelled by the set $E$, with $|E| \le t$.
We say that $t$ erasures occur if $|E|=t$. If $E$ comprises of $t'$ consecutive qubits, we say that $t'$ burst erasures have occurred. 
In the robust field-sensing problem, to bound the MSE of the estimator $\hat \theta$, it suffices to obtain bounds on the QFI
\begin{align}
Q_E(C) \coloneqq   Q\left( U_\theta \rho_C[E] U_\theta^\dagger, \frac{\partial }{\partial\theta}(U_\theta \rho_C[E]U_\theta^\dagger) \right),
\end{align}
where $U_\theta = \exp(-i \theta(Z_1+\dots + Z_{n-|E|}))$.

In this scenario, we find that using our code-inspired probe states, after at most $t$ erasure errors have occurred, $Q_E(C)$ scales with the variances of the weight distributions of $2^t$ shortened codes of $C$. 
Our result applies in a very general setting; aside from a distance criterion we impose on the classical code $C$, we make no other assumptions. Hence, $C$ can in general be a non-linear binary code.
Moreover, we show that if $C$ is any constant-length code concatenated with inner repetition codes of length linear in $n$, then $Q_E(C)$ is at least quadratic in the code length $n$ and the concatenated code can also withstand a linear number of burst erasures in $n$ (see Corollary~\ref{cor:QFI_scaling}).

An implication of our result is that, given any quantum CSS code of constant length, we can concatenate it with repetition codes of length linear in $n$.
If we do so and pick an appropriate state in the CSS codespace, the QFI under erasure errors is boosted by concatenation with the inner repetition codes. The operational significance is that (a) CSS codes are well-understood in quantum coding theory; (b) we can achieve robust field-sensing with CSS states, without any error correction performed; and (c) our framework for robust field-sensing is compatible with subsequent protocols where quantum error correction on CSS codes is required \cite{kessler2014quantum,dur2014improved,arrad2014increasing,unden2016quantum,matsuzaki2017magnetic,Zhou2018,layden2019ancilla,gorecki2019quantum}.

Now let us outline the structure of the paper.
In Section \ref{sec:2-main-result}, we give the main results of our paper, which are bounds on the precision of estimating $\theta$ after $t$ erasures have occurred on a probe state $|\psi_C\>$ constructed from a classical code $C$. 
In Section \ref{sec:2b-RM-codes}, we consider the example where $C$ is a binary Reed-Muller code with parameters RM(1,3), and evaluate upper and lower bounds on its QFI in the noiseless case as well as when there is a single erasure.
In Section \ref{sec:2c-erasures}, we introduce notation related to having multiple erasures, and define a disjointness property for partitions of $C$ that we need to establish our main results.
In Section \ref{sec:2d-lower-bounds}, we give a lower bound on the QFI of a probe state $|\psi_C\>$ after $t$ erasure errors have occurred. This lower bound is related to the variances of the weight distributions of  
$2^t$ shortened codes of $C$.
We also show how concatenating $C$ with inner repetition codes can allow the QFI to scale quadratically with $n$ as long as the length of the outer code is held constant and the number of erasures $t$ remains bounded by the disjointness criterion. Our results imply that our probe states can also tolerate a linear number of burst erasures.  
In Section \ref{sec:2e-upper-bounds}, we give corresponding upper bounds on the QFI.
In Section \ref{sec:3-measurements}, we use the error propagator formula \eqref{eq:error-propagation-with-bigO} to evaluate the MSE for an observable motivated from the SLDs of pure states, and show that the MSE exhibits the same behaviour as the inverse of the QFI. 
In Section \ref{sec:4-examples}, we explain how our results can work with explicit codes.
Finally in Section \ref{sec:5-discussions}, we summarize our results and discuss what we think are interesting problems to consider in the future.

\section{Bounds on the QFI after erasures} \label{sec:2-main-result}
When $t$ erasure errors occur, we can always write down the labels of the erased qubits to be $j_1,\dots, j_t$ and let $E=\{j_1,\dots, j_t\}$ denote the corresponding set of erasures. 
Without loss of generality, we can assume that 
$j_1<\dots<j_t$. 
For notational simplicity, we will denote $\rho = \rho_C[E]$ in the rest of the paper whenever the set $E$ and the code $C$ is clear from the context, and set $H = Z_1+\dots + Z_{n-t}$.

\subsection{Example: A probe state from a [[8,3,2]] Reed-Muller code}
\label{sec:2b-RM-codes}
 
Consider the $[[ 8,3,2 ]]$ quantum Reed-Muller (QRM) code~\cite{Campbell-blog16,Rengaswamy-jsait20} described by two classical binary codes $C_2$ and $C_1$ defined as $C_2 \coloneqq \text{RM}(0,3)$ and $C_1 \coloneqq \text{RM}(1,3)$, respectively.
Due to the properties of RM codes \cite{Macwilliams-1977}, we have $C_2 \subset C_1$ and the dimensions are $\text{dim}(C_2) = 1$ and $\text{dim}(C_1) = 4$.
The standard generator matrix for $C_1$ is given by
\begin{align}
G(C_1) = 
\begin{bmatrix}
1 & 1 & 1 & 1 & 1 & 1 & 1 & 1 \\
\hline
0 & 1 & 0 & 1 & 0 & 1 & 0 & 1 \\
0 & 0 & 1 & 1 & 0 & 0 & 1 & 1 \\
0 & 0 & 0 & 0 & 1 & 1 & 1 & 1
\end{bmatrix} = 
\begin{bmatrix}
G(C_2) \\
\hline
G(C_1/C_2)
\end{bmatrix}.
\end{align}
The $X$-stabilizers are given by $C_2$, the length $8$ repetition code, and the $Z$-stabilizers are given by $C_1^{\perp} = C_1$ since $C_1$ is the $[8,4,4]$ extended Hamming code that is self-dual.
The canonical generators for the logical $X$ operators correspond to degree-$1$ monomials that generate the space $C_1/C_2$, namely $\bar{X}_1 = X_2 X_4 X_6 X_8, \bar{X}_2 = X_3 X_4 X_7 X_8, \bar{X}_3 = X_5 X_6 X_7 X_8$.
Therefore, for $x_1,x_2,x_3 \in \{0,1\}$, the logical computational basis states can be written as
\begin{align}
\ket{x_1 x_2 x_3}_L \equiv \frac{1}{\sqrt{2}} \bar{X}_1^{x_1} \bar{X}_2^{x_2} \bar{X}_3^{x_3} \left( \ket{00000000} + \ket{11111111} \right).
\end{align}
We choose the probe state for metrology as
\begin{align}
\ket{\psi} = \ket{+++}_L \equiv \frac{1}{4} \sum_{c \in C_1} \ket{c}.
\end{align}
First, let us assume that the channel erases the first qubit, so that the resulting reduced density matrix is $\rho = \text{Tr}_1\left[ \ketbra{\psi} \right]$. 
The generator matrix, $G_1^p = G(C_1^p) \in \{0,1\}^{4 \times 7}$, for the code $C_1$ punctured in the first position, which is the standard $[7,4,3]$ Hamming code, is the matrix $G(C_1)$ with the first column removed. 
The last $3$ rows of $G_1^p$, denoted as the matrix $G_1^s = G(C_1^s) \in \{0,1\}^{3 \times 7}$, is a generator matrix for the dual of the Hamming code, which is the shortened $\text{RM}(1,3)$ code, also called the $[7,3,4]$ simplex code.
Therefore, the aforesaid reduced density matrix is
\begin{align}
\rho = \frac{1}{16} \sum_{c_1,c_2 \in C_1^s} \ketbra{c_1}{c_2} + \frac{1}{16} \sum_{c_1,c_2 \in C_1^s} \ketbra{\underline{1} \oplus c_1}{\underline{1} \oplus c_2},
\end{align}
where $\underline{1}$ denotes the length $7$ vector with all entries equal to $1$.
In this example, $H = \sum_{i=1}^{7} Z_i$. To obtain a lower bound on the QFI of the probe state after the first qubit is erased, it suffices to
calculate $2 \tr\left( \rho^2 H^2 \right) - 2 \tr(\rho H \rho H) = \norm{[\rho, H]}_2^2$.
We first observe that
\begin{widetext}
\begin{align}
H^2 & = 7 I_{128} + 2 \sum_{\substack{i,j=1\\i < j}}^{7} Z_i Z_j, \\
\rho^2 & = \frac{1}{256} \sum_{c_1, c_2, c_1', c_2' \in C_1^s} \bigg[ \ket{c_1} \braket{c_2}{c_1'} \bra{c_2'} + \ket{\underline{1} \oplus c_1} \braket{\underline{1} \oplus c_2}{\underline{1} \oplus c_1'} \bra{\underline{1} \oplus c_2'} \bigg] \\
  & = \frac{8}{256} \sum_{c_1,c_2' \in C_1^s} \left[ \ketbra{c_1}{c_2'} + \ketbra{\underline{1} \oplus c_1}{\underline{1} \oplus c_2'} \right] \\
  & = \frac{1}{32} \sum_{c_1,c_2 \in C_1^s} \bigg[ \ketbra{c_1}{c_2} + \ketbra{\underline{1} \oplus c_1}{\underline{1} \oplus c_2} \bigg].
\end{align}
\end{widetext}
Then we can calculate  
 \begin{widetext}
\begin{align}
\rho^2 H^2 & = \underset{A}{\underbrace{\frac{7}{32} \sum_{c_1, c_2 \in C_1^s} \bigg[ \ketbra{c_1}{c_2} + \ketbra{\underline{1} \oplus c_1}{\underline{1} \oplus c_2} \bigg]}} + \underset{B}{\underbrace{\frac{2}{32} \sum_{c_1, c_2 \in C_1^s} \sum_{\substack{i,j=1\\i<j}}^{7} \bigg[ \ketbra{c_1}{c_2} Z_i Z_j + \ketbra{\underline{1} \oplus c_1}{\underline{1} \oplus c_2} Z_i Z_j \bigg]}}, \\ 
\tr(A) & = \frac{7}{32} \times (2 \times 8) = \frac{7}{2}, \\
\tr(B) & \overset{\text{(a)}}{=} \frac{1}{16} \sum_{c_1 \in C_1^s} \sum_{\substack{i,j=1\\i<j}}^{7} 2 (-1)^{c_1 (e_i + e_j)^T} \\
  & \overset{\text{(b)}}{=} \frac{1}{8} \bigg[ \binom{7}{2} + \bigg\{ \binom{4}{2} + \binom{3}{2} - \left( 21 - \binom{4}{2} - \binom{3}{2} \right) \bigg\} \times 7 \bigg] \\
  & = 0 \\
\Rightarrow 2 \tr(\rho^2 H^2) & = 7.
\end{align}
 \end{widetext}
In step (a) the vectors $e_i$ and $e_j$ denote the standard basis vectors for $\{0,1\}^7$ with a single entry $1$ in the $i$-th and $j$-th entry, respectively, and zeros elsewhere.
We have used the fact that $Z_i Z_j \ket{c_2} = (-1)^{c_2 (e_i + e_j)^T} \ket{c_2}$ and $c_1 = c_2$ for the trace to be non-zero.
Furthermore, in step (b) we have used the fact that all non-zero codewords of the simplex code $C_1^s$ have weight exactly $4$.
Next we calculate
 \begin{widetext}
\begin{align}
\rho H & = \frac{1}{16} \sum_{c_1, c_2 \in C_1^s} \sum_{i=1}^{7} \bigg[ \ketbra{c_1}{c_2} Z_i + \ketbra{\underline{1} \oplus c_1}{\underline{1} \oplus c_2} Z_i \bigg] \\
  & = \frac{1}{16} \sum_{c_1, c_2 \in C_1^s} \left( \sum_{i=1}^{7} (-1)^{c_{2,i}} \right) \bigg[ \ketbra{c_1}{c_2} - \ketbra{\underline{1} \oplus c_1}{\underline{1} \oplus c_2} \bigg] .
 \end{align}
 This implies that 
 \begin{align}
 \rho H \rho H & = \frac{1}{256} \bigg[ \sum_{c_1, c_2 \in C_1^s} \sum_{i=1}^{7} (-1)^{c_{2,i}} \left( \ketbra{c_1}{c_2} - \ketbra{\underline{1} \oplus c_1}{\underline{1} \oplus c_2} \right) \bigg] \bigg[ \sum_{c_1', c_2' \in C_1^s} \sum_{j=1}^{7} (-1)^{c_{2,j}'} \left( \ketbra{c_1'}{c_2'} - \ketbra{\underline{1} \oplus c_1'}{\underline{1} \oplus c_2'} \right) \bigg] \\
  & = \frac{1}{256} \sum_{c_1, c_2, c_2' \in C_1^s} \left( \sum_{i,j=1}^{7} (-1)^{c_{2,i} + c_{2,j}'} \right) \left( \ketbra{c_1}{c_2'} + \ketbra{\underline{1} \oplus c_1}{\underline{1} \oplus c_2'} \right) \\
  & = \frac{1}{256} \bigg[ \sum_{c_2 \in C_1^s} \sum_{i=1}^{7} (-1)^{c_{2,i}} \bigg] \bigg[ \sum_{c_1, c_2' \in C_1^s} \left( \sum_{j=1}^{7} (-1)^{c_{2,j}'} \right) \left( \ketbra{c_1}{c_2'} + \ketbra{\underline{1} \oplus c_1}{\underline{1} \oplus c_2'} \right) \bigg] \\
  & = 0 .
  \end{align}
Hence, it follows that $2 \tr(\rho H \rho H) = 0$, from which it follows that 
\begin{align}
\norm{[\rho, H]}_2^2 & = 2 \tr(\rho^2 H^2) - 2 \tr(\rho H \rho H) = 7  .
\end{align}
 \end{widetext}
If any qubit other than the first is erased, it can be easily verified that the resulting shortened code $C_1^s$, where the punctured bit takes the value $0$ in all codewords, has an identical weight distribution as for the above case of the first bit being erased.
A similar statement is true for the coset of this shortened code generated by adding the all $1$s vector to all codewords, corresponding to the second summation in $\rho$ above. 
Hence, if exactly one qubit out of the $8$ are erased, 
then the QFI is at least 7. 

Let us now calculate the QFI lower bound for the state $\ket{\psi}$ when there are no erasures.
First, we have 
\begin{align}
\rho_C & = \ketbra{\psi} = \frac{1}{16} \sum_{c_1,c_2 \in C_1} \ketbra{c_1}{c_2} = \rho^2.
\end{align} 
In this case, we can take  
$H = \sum_{i=1}^{8} Z_i$.
Then $H^2 = 8 I_{256} + 2 \sum_{i < j}^{8} Z_i Z_j$.
Hence,
\begin{align}
&\rho^2 H^2 \notag\\
 =& \frac{8}{16} \sum_{c_1,c_2 \in C_1}\!\!\!\!\! \ketbra{c_1}{c_2} + \frac{2}{16} \sum_{c_1,c_2 \in C_1} \sum_{\substack{i,j=1\\i < j}}^{8} \ketbra{c_1}{c_2} Z_i Z_j \\
  =& \frac{1}{2} \sum_{c_1,c_2 \in C_1}\!\!\!\!\! \ketbra{c_1}{c_2} + \frac{1}{8} \sum_{c_1,c_2 \in C_1}\sum_{\substack{i,j=1\\i < j}}^{8} (-1)^{c_2 (e_i + e_j)^T} \ketbra{c_1}{c_2}.
\end{align}
It is clear that the trace of the first term is $16/2 = 8$.
For the trace of the second term, we calculate
\begin{align}
&\frac{1}{8} \sum_{c_1 \in C_1} \sum_{\substack{i,j=1\\i < j}}^{8} (-1)^{c_1 (e_i + e_j)^T} \notag\\
   \overset{\text{(a)}}{=}& \frac{1}{8} \bigg[ 2 \binom{8}{2} + 14  \bigg\{ \binom{4}{2} + \binom{4}{2} - \left( \binom{8}{2} - 2 \binom{4}{2} \right) \bigg\} \bigg] \\
   =& 7 - 7    = 0.
\end{align}
Here, in step (a) we have used the fact that except the all-zeros codeword and the all-ones codeword, all codewords in $C_1$ have weight exactly $4$.
Therefore, we have $2 \tr(\rho^2 H^2) = 16$.
Next, we observe that
\begin{align}
\rho H & = \frac{1}{16} \sum_{c_1,c_2 \in C_1} \sum_{i=1}^{8} (-1)^{c_{2,i}} \ketbra{c_1}{c_2}.
\end{align}
So, we can calculate
\begin{align}
&  \rho H \rho H  
  \notag\\
   = &\frac{1}{256} \sum_{c_1,c_2,c_1',c_2' \in C_1} \sum_{i,j=1}^{8} (-1)^{c_{2,i} + c_{2,j}'} \ket{c_1} \braket{c_2}{c_1'} \bra{c_2'} \\
   =& \frac{1}{256} \bigg[ \sum_{c_2 \in C_1} \sum_{i=1}^{8} (-1)^{c_{2,i}} \bigg] \sum_{c_1,c_2' \in C_1} \sum_{j=1}^{8} (-1)^{c_{2,j}'} \ketbra{c_1}{c_2'} \\
   =& 0,
\end{align}
again because the only codeword weights in $C_1$ are $0,4,8$.
This implies $2 \tr(\rho H \rho H) = 0$ and thus 
\begin{align}
\norm{[\ketbra{\psi}, H]}_2^2 = 2 \tr(\rho^2 H^2) - 2 \tr(\rho H \rho H) = 16.
\end{align}
This means that the QFI of the Reed-Muller pure probe state is 16.
In contrast, if we use the optimal separable state, 
we have a QFI contribution of $n=8$.
Since this generator bound is tight for pure states, we observe that the introduction of a single erasure has more than halved the QFI lower bound for this Reed-Muller probe state.
We will show later that this situation can be improved by concatenating the chosen code $C = C_1$ with an inner repetition code.
Namely, the QFI lower bound becomes $7r^2$ when we use inner repetition codes of length $r$. 
For example, concatenating with repetition codes of lengths $3, 5$, and $8$ produces probe states consisting of $24, 40$, and $64$ qubits, with QFI lower bounds of $63, 175$, and $448$ respectively, assuming a single qubit is erased.
Hence, these QRM probe states outperform the optimal separable states at large sizes.

\subsection{Multiple erasures on a general probe state}
\label{sec:2c-erasures}
 We will now show that 
the erasure corrupted probe state $\rho_C[E]$ has a particularly simple form. 
But first, we have to introduce some notation that corresponds to the partition of the code $C$ into $2^t$ sets, each labeled by $C_{{\bf z},E}$, where ${\bf z}=(z_1,\dots,z_t)$ denotes a length $t$ binary string.
The set $C_{{\bf z},E}$ consists of all codewords ${\bf c}$ in $C$ that satisfy $c_{j_i} = z_i$ for all $i \in \{1,2,\ldots,t\}$, where $E = \{ j_1, j_2, \ldots, j_t \} \subsetneq \{ 1,2,\ldots,n \}$ as defined earlier.
Now, given a vector $\bx$, let $\bx[E]$ denote the vector obtained from $\bx$ after deleting (puncturing) all components labeled by the set $E$.
Then for all ${\bf z} \in \{0,1\}^t$, we define the length $n-t$ shortened codes of $C$,
\begin{align}
C_{{\bf z}}[E] \coloneqq \{\bx[E]: \bx \in C_{{\bf z},E} \}.
\end{align}
Note that $C_{{\bf z}}[E]$ is a non-linear code except when ${\bf z}$ is the all-zeros vector.
Also, if $C_{\bf z}[E]$ is not an empty set
\begin{align}
       p_{\bf z}  & \coloneqq \frac{|C_{{\bf z}}[E]|}{|C|} = \frac{|C_{{\bf z}, E}|}{|C|},\\
\ket{\psi_{\bf z}} & \coloneqq \frac{1}{\sqrt{| C_{\bf z}[E] |}} \sum_{{\bf x} \in C_{\bf z}[E]} \ket{{\bf x}}.
\end{align}
If $C_{\bf z}[E]$ is an empty set, we let 
$p_{\bf z} =0$ and $\ket{\psi_{\bf z}}=0$.
Using this notation, we will now obtain a simple expression for $\rho_C[E]$, the probe state after qubits in $E$ are erased.  

\begin{proposition}
\label{prop:erasure-t-qubits}
Let $C$ be a binary code of length $n$, and let $E = \{ j_1, j_2, \ldots, j_t \} \subsetneq \{ 1,2,\ldots,n \}$.  
Let $\rho_C = |\psi_C\>\<\psi_C|$, where $|\psi_C\>$ is as given in \eqref{eq:state-defi}. 
When the qubits belonging to $E$ are erased from $\rho_C$, the candidate probe state becomes 
\begin{align}
\rho_{C}[E] = \frac{1}{|C|}
\sum_{{\bf z} \in \{0,1\}^t }
\sum_{\bx,\by \in C_{\bf z}[E]} |\bx\>\<\by| 
=
\sum_{{\bf z} \in \{0,1\}^t }
p_{\bf z} |\psi_{\bf z}\>\<\psi_{\bf z}|.
\end{align} 
\end{proposition}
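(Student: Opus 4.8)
The plan is to compute the erasure channel directly on the rank-one expansion of $\rho_C$ and then re-index the surviving terms according to the partition $\{C_{\bf z}[E]\}_{\bf z}$. First I would write the pure probe state as the double sum $\rho_C = |\psi_C\>\<\psi_C| = \frac{1}{|C|}\sum_{\mathbf{c},\mathbf{c}'\in C}|\mathbf{c}\>\<\mathbf{c}'|$, so that by linearity of $\mathcal N_E$ it suffices to understand the action of the channel on each rank-one term $|\mathbf{c}\>\<\mathbf{c}'|$.

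The key computation is the single-qubit partial trace. Factoring $|\mathbf{c}\> = |c_j\>_j \otimes |\mathbf{c}[\{j\}]\>$ and likewise for $\mathbf{c}'$, one obtains $\tr_j(|\mathbf{c}\>\<\mathbf{c}'|) = \<c'_j|c_j\>\,|\mathbf{c}[\{j\}]\>\<\mathbf{c}'[\{j\}]| = \delta_{c_j,c'_j}\,|\mathbf{c}[\{j\}]\>\<\mathbf{c}'[\{j\}]|$. Since the partial traces on distinct qubits commute, iterating this over the $t$ positions in $E$ gives $\mathcal N_E(|\mathbf{c}\>\<\mathbf{c}'|) = \bigl(\prod_{i=1}^{t}\delta_{c_{j_i},c'_{j_i}}\bigr)\,|\mathbf{c}[E]\>\<\mathbf{c}'[E]|$. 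Hence only those pairs $(\mathbf{c},\mathbf{c}')$ that agree on every erased coordinate survive the trace, and $\rho_C[E] = \frac{1}{|C|}\sum_{\mathbf{c},\mathbf{c}'\,:\,c_{j_i}=c'_{j_i}\ \forall i}|\mathbf{c}[E]\>\<\mathbf{c}'[E]|$.

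Next I would organize this surviving sum by the common restriction to $E$. The constraint $c_{j_i}=c'_{j_i}=z_i$ for all $i$ is precisely the statement that both $\mathbf{c}$ and $\mathbf{c}'$ lie in the same block $C_{{\bf z},E}$, so splitting the sum over the disjoint blocks indexed by ${\bf z}\in\{0,1\}^t$ reproduces it exactly. The one point that deserves a word of justification is that $\mathbf{x}\mapsto\mathbf{x}[E]$ restricts to a bijection from $C_{{\bf z},E}$ onto $C_{\bf z}[E]$: since all codewords in $C_{{\bf z},E}$ take the fixed values ${\bf z}$ on the coordinates in $E$, two distinct such codewords must differ somewhere outside $E$, so puncturing $E$ is injective on the block. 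Rewriting each inner sum over $C_{{\bf z},E}$ as a sum over $C_{\bf z}[E]$ then yields the first claimed equality.

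Finally, the second equality is just a matter of recognizing the inner double sum. From the definitions, $|\psiz\>\<\psiz| = \frac{1}{|C_{\bf z}[E]|}\sum_{\bx,\by\in C_{\bf z}[E]}|\bx\>\<\by|$ and $\pz = |C_{\bf z}[E]|/|C|$, so $\pz\,|\psiz\>\<\psiz| = \frac{1}{|C|}\sum_{\bx,\by\in C_{\bf z}[E]}|\bx\>\<\by|$; summing over ${\bf z}$ completes the proof, with the empty blocks contributing nothing on either side under the conventions $\pz=0$ and $|\psiz\>=0$. I do not expect any real obstacle here: all the content sits in the single-qubit partial-trace identity and the injectivity of puncturing on each fixed block, both of which are elementary, and everything else is bookkeeping.
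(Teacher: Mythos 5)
Your proof is correct, and it is exactly the standard computation that the paper itself leaves implicit: Proposition~\ref{prop:erasure-t-qubits} is stated in the paper without proof, the erasure channel acting term by term on the rank-one expansion of $\rho_C$ via $\tr_j(|\mathbf{c}\>\<\mathbf{c}'|) = \delta_{c_j,c'_j}\,|\mathbf{c}[\{j\}]\>\<\mathbf{c}'[\{j\}]|$ being precisely the intended argument. You also correctly flag and justify the one step that genuinely needs a word, namely that puncturing is a bijection from $C_{{\bf z},E}$ onto $C_{\bf z}[E]$ (so $|C_{{\bf z},E}| = |C_{\bf z}[E]|$ and the re-indexed double sums match), and your handling of the empty blocks and the fact that no disjointness hypothesis is needed for this proposition are both consistent with the paper's subsequent remarks.
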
 

In general, the codes $C_{\bz}[E]$ for distinct values of $\bz \in \{0,1\}^t$ need not be disjoint, and the states $|\psi_\by\>$ and $|\psi_\bz\>$ need not be pairwise orthogonal.
Proposition \ref{prop:erasure-t-qubits} only gives a spectral decomposition of $\rho_C[E]$ when the codes $C_{\bz}[E]$ are disjoint codes for distinct values of $\bz \in \{0,1\}^t$. 
This disjointness condition is guaranteed whenever $C$ has distance strictly larger than $t$, though this distance criteria is not a necessary condition. 
We make this condition explicit in the following definition. 
\begin{definition}
Let $C$ be a code and $E$ be a $t$-set such that any pair of codes $C_\by[E]$ and $C_\bz[E]$ are disjoint for distinct $\by,\bz \in \{0,1\}^t$.
Then we say that $C$ is $t$-disjoint with respect to $E$.
\end{definition}

Subsequent subsections obtain upper and lower bounds on the QFI of $U_\theta \, \rho_{C}[E] \, U_\theta^\dagger$ in terms of the properties of the classical code $C$.
Crucial to the development of these bounds are the \emph{weight enumerators} of $C_{{\bf z},E}$ and $C_{\bf z}[E]$, given respectively by
\begin{align}
    A_{C,k,{\bf z},E} &= |\{ \bx \in C_{{\bf z},E}  \colon \wt(\bx) = k \}|, \\
    A_{C,k,{\bf z}}[E] &= |\{ \bx \in C_{\bf z}[E] \colon \wt(\bx) = k \}|.
\end{align} 
We correspondingly define ${\mathcal X}_{C,{\bf z},E}$ and ${\mathcal X}_{C,{\bf z}}[E]$ to be random variables such that 
\begin{align}
\Pr[{\mathcal X}_{C,{\bf z},E} = k] &= \frac{ A_{C,k,{\bf z},E} }{ |C_{{\bf z},E}| }, \label{eq:Ajp-weightenum-prob}\\
\Pr[{\mathcal X}_{C,{\bf z}}[E] = k] &= \frac{ A_{C,k,{\bf z}}[E] }{ |C_{\bf z}[E]| }.\label{eq:weightenum-prob}
\end{align}
In Proposition \ref{prop:puncture-vs-nopuncture}, we prove that the variances of the random variables ${\mathcal X}_{C,{\bf z},E}$ and ${\mathcal X}_{C,{\bf z}}[E] $ are equal.
\begin{proposition}\label{prop:puncture-vs-nopuncture}
Let $C$ be any binary code of length $n$. Then, for any $E \subsetneq \{1,\dots,n\}$ and  ${\bf z} \in \{0,1\}^{|E|}$, we have $\var({\mathcal X}_{C,{\bf z},E}) = \var({\mathcal X}_{C,{\bf z}}[E] )$.
\end{proposition}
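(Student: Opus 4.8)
The plan is to show that the two random variables ${\mathcal X}_{C,{\bf z},E}$ and ${\mathcal X}_{C,{\bf z}}[E]$ have laws that differ only by a deterministic additive shift, namely the constant $\wt(\bz)$, and then to invoke translation-invariance of the variance. Concretely, I expect to prove that ${\mathcal X}_{C,{\bf z},E}$ and ${\mathcal X}_{C,{\bf z}}[E] + \wt(\bz)$ are equal in distribution, from which $\var({\mathcal X}_{C,{\bf z},E}) = \var({\mathcal X}_{C,{\bf z}}[E])$ follows immediately.

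First I would establish that the puncturing map $\bx \mapsto \bx[E]$ restricts to a bijection from $C_{{\bf z},E}$ onto $C_{\bf z}[E]$. Surjectivity is built into the definition $C_{\bf z}[E] = \{\bx[E] : \bx \in C_{{\bf z},E}\}$. For injectivity, observe that any two codewords $\bx,\bx' \in C_{{\bf z},E}$ already agree on the coordinates indexed by $E$, since both equal $\bz$ there; so if they also agree after the coordinates in $E$ are deleted, they agree everywhere and must be equal. In particular, this bijection gives $|C_{{\bf z},E}| = |C_{\bf z}[E]|$, so the two denominators appearing in \eqref{eq:Ajp-weightenum-prob} and \eqref{eq:weightenum-prob} coincide.

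Next I would decompose the Hamming weight across this bijection. Because every $\bx \in C_{{\bf z},E}$ satisfies $x_{j_i} = z_i$ on $E$, the coordinates in $E$ contribute exactly $\wt(\bz)$ ones, while the remaining coordinates are precisely those of $\bx[E]$. Hence $\wt(\bx) = \wt(\bx[E]) + \wt(\bz)$ for every $\bx \in C_{{\bf z},E}$. Combining this with the bijection yields the weight-enumerator identity $A_{C,k,{\bf z},E} = A_{C,k-\wt(\bz),{\bf z}}[E]$, so the distribution of ${\mathcal X}_{C,{\bf z},E}$ is exactly that of ${\mathcal X}_{C,{\bf z}}[E]$ with its support translated by $\wt(\bz)$. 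Since adding a constant leaves the variance unchanged, the claim $\var({\mathcal X}_{C,{\bf z},E}) = \var({\mathcal X}_{C,{\bf z}}[E])$ follows.

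I do not anticipate a genuine obstacle here; the entire content of the statement is the weight-decomposition identity $\wt(\bx) = \wt(\bx[E]) + \wt(\bz)$, which holds precisely because every codeword in $C_{{\bf z},E}$ carries the same fixed pattern $\bz$ on the erased coordinates. The one point deserving care is verifying that the puncturing map is a bijection rather than merely a surjection, so that the two weight enumerators are related by a clean reindexing of $k$ rather than by an unequal reweighting of the distributions.
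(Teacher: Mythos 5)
Your proof is correct and follows essentially the same route as the paper's: both rest on the weight decomposition $\wt(\bx) = \wt(\bx[E]) + \wt(\bz)$ for $\bx \in C_{{\bf z},E}$, so that the two weight distributions differ by the deterministic shift $\wt(\bz)$, and variance is shift-invariant. Your explicit verification that puncturing restricts to a bijection from $C_{{\bf z},E}$ onto $C_{\bf z}[E]$ (since codewords in $C_{{\bf z},E}$ already agree on $E$) is a detail the paper leaves implicit, and it is the right point to make careful.
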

\begin{proof} 
Given any codeword $\bx \in C_{{\bf z},E}$, there is a corresponding codeword $\bx[E] \in C_{\bf z}[E]$ such that $\wt(\bx) = \wt(\bx[E]) + \wt({\bf z})$.
So, all weights of $C_{{\bf z},E}$ are a constant $\wt({\bf z})$ away from the weights of $C_{\bf z}[E]$, i.e., ${\mathcal X}_{C,{\bf z},E} = {\mathcal X}_{C,{\bf z}}[E] + \wt({\bf z})$.
Hence, this constant does not affect the variance of these associated random variables.
\end{proof}

\subsection{Lower bounds using the variances of weights}
\label{sec:2d-lower-bounds}

Given a set of erasure errors of size $t$,
our first key result in this section is a lower bound for $Q_E(C)$ in terms of the variances of the weight distributions of the codes $C_{{\bf z},E}$, for ${\bf z} \in \{ 0,1\}^t$. 
\begin{theorem} \label{thm:main-result}
Let $C$ be a binary code of length $n$, and let $E = \{j_1,\dots,j_t\} \subsetneq \{1,\ldots, n\}$ label a set of qubits that will be erased. 
Suppose that $C$ is $t$-disjoint with respect to $E$.
Then we have
\begin{align}
Q_E(C) \ge 
8 \sum_{{\bf z} \in \{0,1\}^t}   p_{{\bf z}}^2 \Var({\mathcal X}_{C,{\bf z},E} ). 
\end{align}
\end{theorem}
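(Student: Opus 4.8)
The plan is to start from the generator-based lower bound \eqref{eq:QFI bound}, namely $Q_E(C) \ge \norm{[\rho,H]}_2^2 = 2\tr(\rho^2 H^2) - 2\tr(\rho H \rho H)$ evaluated at $\rho = \rho_C[E]$ and $H = Z_1 + \cdots + Z_{n-t}$, and to compute the right-hand side using the structure of $\rho$ furnished by Proposition~\ref{prop:erasure-t-qubits}. The whole argument is an exercise in showing that this commutator norm collapses to a sum of variances once the $t$-disjointness hypothesis is used.

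First I would invoke the assumption that $C$ is $t$-disjoint with respect to $E$. Disjointness of the codes $C_{\bf z}[E]$ means the states $\ket{\psi_{\bf z}}$ are supported on mutually exclusive sets of computational-basis vectors, hence form an orthonormal family, so $\rho = \sum_{\bf z} p_{\bf z}\ketbra{\psi_{\bf z}}$ is a genuine spectral decomposition. From this, $\rho^2 = \sum_{\bf z} p_{\bf z}^2 \ketbra{\psi_{\bf z}}$ and therefore $\tr(\rho^2 H^2) = \sum_{\bf z} p_{\bf z}^2 \bra{\psi_{\bf z}} H^2 \ket{\psi_{\bf z}}$, while $\tr(\rho H \rho H) = \sum_{\bf y,\bf z} p_{\bf y} p_{\bf z} \, \bra{\psi_{\bf y}} H \ket{\psi_{\bf z}} \bra{\psi_{\bf z}} H \ket{\psi_{\bf y}}$.

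The crucial simplification is that $H$ is diagonal in the computational basis: writing $H\ket{\bx} = \ell(\bx)\ket{\bx}$ with $\ell(\bx) = (n-t) - 2\wt(\bx)$, any off-diagonal element $\bra{\psi_{\bf y}} H \ket{\psi_{\bf z}}$ with ${\bf y} \neq {\bf z}$ is a sum of terms $\ell(\bx)\braket{\bx'}{\bx}$ over $\bx' \in C_{\bf y}[E]$ and $\bx \in C_{\bf z}[E]$, and disjointness forces every $\braket{\bx'}{\bx}$ to vanish. Hence all cross terms die and $\tr(\rho H \rho H) = \sum_{\bf z} p_{\bf z}^2 \, \bra{\psi_{\bf z}} H \ket{\psi_{\bf z}}^2$. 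This is the step I would scrutinize most, since it is precisely where the $t$-disjointness hypothesis does its work and where the off-diagonal clutter (which would otherwise block a clean variance interpretation) is removed; it is the only genuinely structural point in the proof.

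Finally I would read each surviving summand as a variance. Because $\bra{\psi_{\bf z}} H^2 \ket{\psi_{\bf z}}$ and $\bra{\psi_{\bf z}} H \ket{\psi_{\bf z}}$ are the second moment and the mean of $\ell(\bx)$ under the uniform distribution on $C_{\bf z}[E]$, their difference equals $\var(\ell) = 4\,\var({\mathcal X}_{C,{\bf z}}[E])$, the factor $4$ coming from the affine relation $\ell(\bx) = (n-t) - 2\wt(\bx)$. Combining this with the overall factor $2$ from \eqref{eq:QFI bound} gives $Q_E(C) \ge 8 \sum_{\bf z} p_{\bf z}^2 \, \var({\mathcal X}_{C,{\bf z}}[E])$, and applying Proposition~\ref{prop:puncture-vs-nopuncture} to replace $\var({\mathcal X}_{C,{\bf z}}[E])$ by $\var({\mathcal X}_{C,{\bf z},E})$ yields exactly the stated bound. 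The remaining computation is routine moment bookkeeping, so the expected obstacle is entirely conceptual: correctly justifying the orthogonality and the vanishing of cross terms from disjointness together with the diagonality of $H$.
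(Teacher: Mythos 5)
Your proposal is correct and takes essentially the same route as the paper: the generator lower bound \eqref{eq:QFI bound} with $\rho = \rho_C[E]$ and $H = Z_1 + \cdots + Z_{n-t}$, the decomposition from Proposition~\ref{prop:erasure-t-qubits}, the use of $t$-disjointness together with the diagonality of $H$ to eliminate all cross terms, the variance identity from the affine relation $\ell(\bx) = (n-t) - 2\wt(\bx)$, and finally Proposition~\ref{prop:puncture-vs-nopuncture} to pass from $\var({\mathcal X}_{C,{\bf z}}[E])$ to $\var({\mathcal X}_{C,{\bf z},E})$. The only difference is bookkeeping: you exploit the spectral decomposition $\rho = \sum_{\bf z} p_{\bf z} \ketbra{\psi_{\bf z}}$ and moments of $\ell$ directly, whereas the paper carries out the identical computation through the weight-enumerator sums $q_{\bf z}$ and $r_{\bf z}$, reaching the same conclusion.
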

\begin{proof}
Recall that $\rho = \rho_C[E]$ and $H=Z_1 + \dots + Z_{n-t}$.
Using the disjointness of the codes $C_\bz$ along with the form of $\rho$ from Proposition \ref{prop:erasure-t-qubits}, we see that
\begin{align}
\rho^2
=
\frac{1}{|C|^2}
\sum_{{\bf z} \in \{0,1\}^t}
|C_{\bf z}[E]|
\sum_{\bx,\by \in C_{\bf z}[E]} |\bx\>\<\by| .\label{eq:tau2}
\end{align}

Now, for every $(n-t)$-bit string $\bx$, we have
\begin{align}
H |\bx\> &= ( -\wt(\bx) + (n-t-\wt(\bx) )|\bx\> \notag\\
&= (n-t - 2\wt(\bx))|\bx\>.
\end{align}
The cyclic property of the trace implies that $\tr(\rho^2 H^2)
    =\tr(H\rho^2 H)$. Using \eqref{eq:tau2}, and taking the trace, we get
\begin{align}
 \tr(\rho^2 H^2) 
=&
\frac{1}{|C|^2}
\!\!
\sum_{{\bf z} \in \{0,1\}^t} |C_{\bf z}[E]|
\!\!\!\!
\sum_{\bx,\by,\bu \in C_{\bf z}[E]}
\!\!
\<\bu |H|\bx\>\<\by|H|\bu\> .
\end{align}
Since $H$ is a diagonal operator in the computational basis 
it follows that 
\begin{align}
\tr(\rho^2 H^2) 
=&
\frac{1}{|C|^2}
\sum_{{\bf z} \in \{0,1\}^t}  |C_{\bf z}[E]|
\sum_{\bx  \in C_{\bf z}[E]}
\<\bx |H|\bx\>\<\bx|H|\bx\> .
\end{align}
Since $\<\bx|H|\bx\> = (n-t-2\wt(\bx))$,
we establish a connection between $\tr(\rho^2 H^2)$ and the weight enumerator through the equation
\begin{align}
\tr(\rho^2 H^2)
=
\frac{1}{|C|^2} \sum_{{\bf z} \in \{0,1\}^t}  |C_{\bf z}[E]| \, q_{\bf z},   \label{eq:piece1}
\end{align}
where
\begin{align}
 q_{\bf z} & \coloneqq 
\sum_{k=0}^{n-t} A_{C,k,{\bf z}}[E] 
\left(
(n-t)^2 - 4(n-t) k + 4k^2
\right) \label{eq:qp-defi}.
\end{align}
Similarly, we can see that 
\begin{align}
&
\tr(\rho H \rho H)
\notag\\
=&
\frac{1}{|C|^2}
\sum_{{\bf z} \in \{0,1\}^t} 
\sum_{\bx,\by,\bu,\bv \in C_{\bf z}[E]}
\<\bv |H|\bx\>\<\by|H|\bu\> \notag\\
=&
\frac{1}{|C|^2}
\sum_{{\bf z} \in \{0,1\}^t} 
\sum_{\bx,\by \in C_{\bf z}[E]}
\<\bx |H|\bx\>\<\by|H|\by\> \notag\\
=&
\frac{1}{|C|^2}
\sum_{{\bf z} \in \{0,1\}^t} 
\sum_{\bx,\by \in C_{\bf z}[E]}
(n-t-2\wt(\bx))(n-t-2\wt(\by)) \notag\\
=&
\frac{1}{|C|^2}
\sum_{{\bf z} \in \{0,1\}^t} 
\left(\sum_{\bx \in C_{\bf z}[E]}
(n-t-2\wt(\bx))\right)^2.
\end{align}
Then, by direct usage of the definitions of the classical weight enumerators, we get
\begin{align}
\tr(\rho H \rho H)
=    
\sum_{{\bf z} \in \{0,1\}^t}  \frac{r_{\bf z}^2}{|C|^2} \label{eq:piece2},
\end{align}
where
\begin{align}
 r_{\bf z} & \coloneqq \sum_{k=0}^{n-t}  A_{C,k,{\bf z}}[E] (n-t-2k) \label{eq:rp-defi}.
\end{align}
Since we have the generator bound 
$Q_E(C) \ge 
2\tr(\rho^2 H^2 ) -
2\tr(\rho H \rho H) $,
we can use \eqref{eq:piece1} and \eqref{eq:piece2} to get
\begin{align}
Q_E(C) \ge \frac{2}{|C|^2} \sum_{{\bf z} \in \{0,1\}^t}  
\left( |C_{{\bf z}}[E]| q_{\bf z} - r_{\bf z}^2 \right) . \label{eq:almost-there}
\end{align}
Using \eqref{eq:weightenum-prob} with \eqref{eq:qp-defi} and \eqref{eq:rp-defi} respectively, we get
\begin{align}
q_{\bf z} &=
 |C_{\bf z}[E]| \left(
(n-t)^2
- 4(n-t) \mathbb E({\mathcal X}_{C,{\bf z}}[E] ) 
+ 4 \mathbb E({\mathcal X}_{C,{\bf z}}[E]^2 )
\right),
\label{eq:qp-simple}\\
r_{\bf z} &= |C_{\bf z}[E]|  \left( (n-t) -  2 \mathbb E({\mathcal X}_{C,{\bf z}}[E] ) \right).  \label{eq:rp-simple}
\end{align}
Noting that 
\begin{align}
&\frac{r_{\bf z}^2}{|C_{\bf z}[E]|^2} 
=   \left( (n-t) -  2 \mathbb E({\mathcal X}_{C,{\bf z}}[E] ) \right)^2\notag\\
=&   \left( (n-t)^2 -  4(n-t) \mathbb E({\mathcal X}_{C,{\bf z}}[E] ) +  4 \mathbb E({\mathcal X}_{C,{\bf z}}[E] )^2 \right) ,\notag
\end{align}
it follows that
\begin{align}
|C_{\bf z}[E]| q_{\bf z} - r_{\bf z}^2  
&= 4 |C_{\bf z}[E]|^2 \Var({\mathcal X}_{C,{\bf z}}[E] ).\label{eq:-q-r}
\end{align}
From Proposition \ref{prop:puncture-vs-nopuncture}, it follows from \eqref{eq:-q-r}
\begin{align}
|C_{\bf z}[E]| q_{\bf z} - r_{\bf z}^2  
&= 4 |C_{\bf z}[E]|^2 \Var({\mathcal X}_{C,{\bf z},E} ).\label{eq:-q-r2}
\end{align}
Using \eqref{eq:-q-r2} on the inequality \eqref{eq:almost-there} then gives the result.
\end{proof}

{\bf Remark:} Suppose that $C$ is a linear code with distance at least 2, and has no zero columns in its generator matrix. Then for all $z=0,1$, and $j=1,\dots, n$, the cardinality of $C_{(z),\{j\}}$ is equal to $|C|/2$. 
Therefore,
\begin{align}
Q_{\{j\}}(C) \ge
2 ( \Var({\mathcal X}_{C,{(0)},\{j\}}) + \Var({\mathcal X}_{C,{(1)},\{j\}}) ).
\end{align}

Theorem~\ref{thm:main-result} shows an explicit relation between a lower bound for QFI and the classical weight enumerators of the relevant punctured codes under the $t$-qubit erasure model.
We recollect that for quantum metrology we desire the QFI to grow quadratically, or at least superlinearly, with the number of (physical) qubits $n$.
Next we show that concatenating a fixed length code $C$ with a repetition code can boost the QFI and hence potentially lead us towards our goal.

\begin{lemma}[Boosting Lemma]\label{lem:boosting}
Let $C^{\rm outer}$ be a binary code of length $m$.
Denote by $C$ the concatenated code of length $n=mr$, where $C$ is the concatenation of $C^{\rm outer}$ with a repetition code of length $r$ as the inner code, i.e., replace each bit of codewords in $C^{\rm outer}$ with $r$ copies of that bit. 
Let $E = \{i_1,\ldots, i_{t'}\} \subsetneq \{1,\ldots, n\}$ be a set that labels which $t'$ qubits are to be erased, so that the set of outer blocks with at least one erasure is given by 
$E_{g} = \{ \lfloor (i-1)/r \rfloor + 1 : i \in E \} = \{j_1,\ldots, j_{t}\} \subsetneq \{1,\ldots, m\}$, which has $t$ distinct elements.
Suppose that $C^{\rm outer}$ is $t$-disjoint with respect to $E_g$. 
Then we have
\begin{align}
Q_E(C) & \ge  
8 r^2 \sum_{{\bf z}_g \in \{0,1\}^{t}}  
p_{{\bf z}_g}^2 \Var({\mathcal X}_{C^{\rm outer},{\bf z}_g,E} ).
\end{align}
\end{lemma}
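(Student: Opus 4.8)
The plan is to apply Theorem~\ref{thm:main-result} directly to the concatenated code $C$ together with the size-$t'$ erasure set $E$, and then translate every quantity in that bound back down to the outer code $C^{\rm outer}$. Two structural facts drive the whole reduction. First, concatenation with a length-$r$ repetition code is a bijection $C^{\rm outer}\to C$ that replaces each outer bit $c_j$ by $r$ identical copies; it therefore preserves cardinalities, so $|C|=|C^{\rm outer}|$ and, block by block, $|C_{\bz,E}| = |C^{\rm outer}_{\bz_g,E_g}|$. Second, it scales Hamming weight exactly by $r$, so $\wt(\bx)=r\,\wt(\bc)$ whenever $\bx\in C$ corresponds to $\bc\in C^{\rm outer}$.

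The first thing I would verify is the only nontrivial hypothesis of Theorem~\ref{thm:main-result}: that $C$ is $t'$-disjoint with respect to $E$. Because every codeword of $C$ is constant on each length-$r$ block, a string $\bz\in\{0,1\}^{t'}$ produces a nonempty $C_{\bz}[E]$ only when it is constant on erased positions sharing an outer block; such \emph{consistent} strings are in bijection with assignments $\bz_g\in\{0,1\}^{t}$ to the $t$ erased outer blocks of $E_g$, and for them $p_{\bz}=p_{\bz_g}$. For two distinct consistent strings one necessarily has $\by_g\neq\bz_g$, and I would show $C_{\by}[E]\cap C_{\bz}[E]=\emptyset$ by lifting a hypothetical common punctured word to the outer level: such a word forces the two corresponding outer codewords to agree on every position outside $E_g$, hence to yield a common element of $C^{\rm outer}_{\by_g}[E_g]\cap C^{\rm outer}_{\bz_g}[E_g]$, contradicting the assumed $t$-disjointness of $C^{\rm outer}$ with respect to $E_g$.

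With disjointness in hand, Theorem~\ref{thm:main-result} gives $Q_E(C)\ge 8\sum_{\bz\in\{0,1\}^{t'}} p_{\bz}^2\,\Var(\mathcal X_{C,\bz,E})$, and the sum collapses to one over the consistent strings, i.e.\ over $\bz_g\in\{0,1\}^{t}$. Using the weight-preserving bijection between $C_{\bz,E}$ and $C^{\rm outer}_{\bz_g,E_g}$ together with $\wt(\bx)=r\,\wt(\bc)$, the weight random variable obeys $\mathcal X_{C,\bz,E}=r\,\mathcal X_{C^{\rm outer},\bz_g,E_g}$ in distribution, whence $\Var(\mathcal X_{C,\bz,E})=r^2\,\Var(\mathcal X_{C^{\rm outer},\bz_g,E_g})$. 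Substituting this and $p_{\bz}=p_{\bz_g}$ and pulling out the factor $r^2$ yields exactly the claimed bound.

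I expect the disjointness verification to be the main obstacle. When every erased outer block in $E_g$ retains at least one surviving bit, disjointness of $C$ is automatic, since the surviving bits already pin down $\by_g$ and $\bz_g$ from the punctured word; the subtlety is the case of a fully erased outer block, where the punctured word carries no information about that block's value, so the contradiction cannot be read off the punctured strings directly and must instead be routed through the outer-code disjointness hypothesis as above. The remaining cardinality and weight-scaling bookkeeping is routine once the bijection and the ``constant-on-blocks'' structure are made precise.
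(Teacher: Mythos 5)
Your proposal is correct and follows essentially the same route as the paper's proof: apply Theorem~\ref{thm:main-result} to the concatenated code with the erasure set $E$, observe that only block-consistent strings $\bz \in \{0,1\}^{t'}$ give nonempty $C_{\bz}[E]$ so the sum collapses to at most $2^t$ terms indexed by $\bz_g$, and use $|C|=|C^{\rm outer}|$, $|C_{\bz,E}|=|C^{\rm outer}_{\bz_g,E_g}|$, and the exact weight scaling $\wt(\bx')=r\,\wt(\bx)$ to obtain $\Var(\mathcal{X}_{C,\bz,E})=r^2\,\Var(\mathcal{X}_{C^{\rm outer},\bz_g,E_g})$. The one point where you go beyond the paper is your explicit verification that $C$ is $t'$-disjoint with respect to $E$ — the hypothesis required to invoke Theorem~\ref{thm:main-result} — including the subtle fully-erased-block case routed through the assumed $t$-disjointness of $C^{\rm outer}$, a step the paper's proof leaves implicit.
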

\begin{proof}
From Theorem \ref{thm:main-result} we have
\begin{align}
Q_E(C) \ge  \frac{8}{|C|^2} \sum_{{\bf z} \in \{0,1\}^{t'}}  
 |C_{{\bf z}}[E]|^2 \Var({\mathcal X}_{C,{\bf z},E} ).
\end{align}  
Due to the concatenation structure, for many values of ${\bf z}$, $ |C_{{\bf z}}[E]|$ is equal to zero. 
Hence, in the above summation, we only need to count terms where ${\bf z}$ respects the concatenation structure of $C$. 
In particular, whenever $i_k$ and $i_{k'}$ are elements of $E$ such that they belong to the same outer block, i.e.,  
\begin{align}
\lfloor (i_k - 1)/r \rfloor = \lfloor (i_{k'} - 1)/r \rfloor,
\end{align}
we must have $z_{i_k}=z_{i_{k'}}$.
Now let us label the qubits in $E$ that occur on the $j$th outer block as 
\begin{align}
S_{j} = \{i \in E:  
  \lfloor (i-1)/r \rfloor + 1 = j \}.
\end{align} 
Then, for a given ${\bf z} \in \{0,1\}^{t'}$, we know that if ``$z_{i_{k}} = z_{i_{k'}}$ for all $i_{k}, i_{k'} \in S_j$'' holds for all $j \in E_g$, then $|C_{{\bf z}}[E]| = |C^{\rm outer}_{{\bf z}_g}[E_g]|$; otherwise, $|C_{{\bf z}}[E]| = 0$.
Hence, the number of terms in the summation that are non-zero is at most $2^t$ instead of $2^{t'}$.
Next, let ${\bf x} = (x_1,\dots,x_m)$ be a codeword of $C^{\rm outer}$.
Then the corresponding concatenated codeword in $C$ is 
\begin{align}
{\bx'} = (\underbrace{x_1, \dots ,x_1}_r , \overbrace{\dots, \dots, \dots }^{(m-2)r} , \underbrace{x_m, \dots , x_m}_r ),
\end{align}
and it comprises of $m$ blocks of repeated indices. 
The weight of ${\bx'}$ is $r$ times of $\wt({\bx})$ for every $\bx \in C^{\rm outer}$.
Hence, it follows that 
\begin{align}
\Var({\mathcal X}_{C,{\bf z},E} ) = r^2 \Var({\mathcal X}_{C^{\rm outer},{\bf z}_g,S} ).
\end{align}
Since we also have $|C| = |C^{\rm outer}|$, the lemma follows.
\end{proof} 


\begin{corollary}
\label{cor:QFI_scaling}
From the above result that QFI of the concatenated code scales quadratically with the lengths of the inner repetition codes $r$, we make the following conclusions:
\begin{enumerate}

\item  When the outer code is fixed and the length of the inner repetition codes are allowed to grow, the QFI scales like $\Omega(r^2)=\Omega(n^2/m^2)$, and this is quadratic in the code length $n$ since $m$ is a constant.

\item  Since $t'$ in the boosting lemma is at most $tr$, and $C$ being $t$-disjoint fixes $t < m$, it follows that $t'=\Omega(n)$ because $r = n/m$ scales linearly in $n$, by definition, once $m$ is fixed. 
Hence, following the arguments in the previous conclusion, we can remain robust to a linear number of burst erasures while also having the QFI scale quadratically with $n$.

\end{enumerate}
\end{corollary}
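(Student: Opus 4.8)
The plan is to read the asymptotics directly off the Boosting Lemma (Lemma~\ref{lem:boosting}), treating every quantity attached to the fixed outer code as a constant. First I would fix $C^{\rm outer}$ once and for all; then $m$, the number of affected blocks $t = |E_g|$, the probabilities $p_{{\bf z}_g}$, and the weight variances $\Var({\mathcal X}_{C^{\rm outer},{\bf z}_g,E})$ are all independent of the inner length $r$. Collecting them into
\begin{align}
c \coloneqq 8 \sum_{{\bf z}_g \in \{0,1\}^{t}} p_{{\bf z}_g}^2 \Var({\mathcal X}_{C^{\rm outer},{\bf z}_g,E}),\notag
\end{align}
the Boosting Lemma reads $Q_E(C) \ge c\, r^2$. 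The crucial point to flag is that this bound is asymptotically useful only when $c>0$, i.e.\ when at least one shortened outer code $C^{\rm outer}_{{\bf z}_g}[E_g]$ has a non-trivial (strictly positive variance) weight distribution; this is exactly the hypothesis emphasised in the abstract. Assuming $c>0$, substituting $r = n/m$ gives $Q_E(C) \ge (c/m^2)\, n^2 = \Omega(n^2/m^2)$, which establishes the first claim since $m$ is held constant.

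For the second claim I would track how many physical qubits may be erased while keeping the bound valid. The erasure set $E$ of size $t'$ is spread across the $t = |E_g|$ affected outer blocks, and each block contains exactly $r$ physical qubits, so $t' \le tr$, with equality when every qubit of each affected block is erased. The $t$-disjointness hypothesis on $C^{\rm outer}$ constrains $t$ to a constant that is fixed once the outer code is fixed (in particular $t<m$, since $E_g$ is a proper subset of $\{1,\dots,m\}$, and more sharply $t$ lies below the minimum distance of $C^{\rm outer}$). Taking $t$ to be the largest admissible value and $r = n/m$ then yields a tolerable erasure count $t' = tr = \Omega(n)$, linear in $n$, while the first claim shows the QFI remains $\Omega(n^2)$ for any such $E$ whose induced $E_g$ preserves $t$-disjointness.

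To convert this into a burst-erasure statement I would observe that a burst of $\ell$ consecutive physical qubits occupies at most $\lceil \ell/r\rceil + 1$ consecutive outer blocks; choosing $\ell$ of order $tr$ therefore touches at most $t$ (or $t+1$) blocks, which the disjointness budget accommodates, so a linear-in-$n$ burst is tolerated with the $\Omega(n^2)$ QFI intact. I expect the only genuinely delicate points to be (a) justifying $c>0$, which is a property of the outer code's shortened-code weight distributions rather than an automatic consequence of the Boosting Lemma, and (b) the off-by-one block counting in the burst argument; everything else is routine substitution of $r = n/m$.
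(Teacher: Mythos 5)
Your proposal matches the paper's own treatment: the corollary carries no separate proof in the paper, and both conclusions are read directly off the Boosting Lemma (Lemma~\ref{lem:boosting}) by absorbing all outer-code quantities into a constant and substituting $r = n/m$, exactly as you do. Your explicit caveat that the constant $c = 8\sum_{{\bf z}_g} p_{{\bf z}_g}^2 \Var({\mathcal X}_{C^{\rm outer},{\bf z}_g,E})$ must be strictly positive is a genuine improvement in precision: the corollary states $\Omega(r^2)$ without flagging this, although the hypothesis that the shortened codes have a non-trivial weight distribution does appear in the abstract, so you have correctly surfaced an implicit assumption rather than introduced a new one.

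One step in your burst-erasure paragraph overreaches. You write that a burst of length $\ell \approx tr$ touches at most $t$ ``(or $t+1$)'' blocks, ``which the disjointness budget accommodates'' --- but $t$-disjointness of $C^{\rm outer}$ does not accommodate $t+1$ affected blocks; if an unaligned burst straddles $t+1$ blocks, the hypothesis of Lemma~\ref{lem:boosting} fails and the bound gives nothing. The repair is immediate: either restrict to bursts of length $\ell \le (t-1)r + 1$, which guarantees $|E_g| \le t$ for every starting position and is still $\Omega(n)$ whenever $t \ge 2$, or observe that the paper's claim $t' \le tr$ implicitly considers erasure patterns confined to $t$ blocks (e.g., block-aligned bursts), in which case $\ell = tr$ is fine. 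Either way the conclusion --- linear-in-$n$ burst tolerance with $\Omega(n^2)$ QFI --- stands; only the worst-case block count needs the off-by-one adjustment that you yourself anticipated as the delicate point.
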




So far, we have been discussing the asymptotic performance of the QFI under arbitrary erasure errors and burst erasure errors. Later in Section~\ref{sec:4a-Boosted-RM-codes}, we give more explicit lower bounds on the QFI using noisy probe states which arise from Reed-Muller codes concatenated with inner repetition codes.

\subsection{Upper bounds using the variances of weights}
 \label{sec:2e-upper-bounds}
In the previous section, we have analyzed the minimum scaling of QFI for probe states based on classical codes and also shown that concatenation with repetition codes helps achieve the Heisenberg scaling.
However, it is still an interesting problem to explore the upper limit on QFI under mild assumptions, without any concatenation.
The following result establishes such an upper bound for code-inspired probe states.

\begin{theorem} \label{thm:main-result-upperbound}
Let $C$ be a length $n$ binary code, 
and let $ E = \{j_1,\dots,j_t\} \subsetneq \{ 1,2,\ldots,n \} $.  
Suppose further that $C$ is $t$-disjoint with respect to $E$.  
Then we have
\begin{align}
&Q_E(C) \notag\\
\le&  
    16 \sum_{{\bf z} \in \{0,1\}^t} p_{\bf z} \, \mathbb E({\mathcal X}_{C,{\bf z}}[E]^2 ) 
     -
    16 \left( \sum_{{\bf z} \in \{0,1\}^t} p_{\bf z} \, \mathbb E({\mathcal X}_{C,{\bf z}}[E] ) \right)^2 .
\end{align}
\end{theorem}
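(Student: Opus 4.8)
The plan is to bound $Q_E(C)$ from above using the variance inequality recorded in \eqref{eq:qfi-upperbound-variance}, namely
$Q_E(C) \le 4\tr(\rho H^2) - 4\tr(\rho H)^2$ with $\rho = \rho_C[E]$ and $H = Z_1 + \dots + Z_{n-t}$. This reduces the whole problem to computing the first and second moments of the generator $H$ in the state $\rho$. For that I would substitute the mixture form $\rho = \sum_{\bz \in \{0,1\}^t} \pz \ketbra{\psiz}$ given by Proposition \ref{prop:erasure-t-qubits}, together with the diagonal action $H\ket{\bx} = (n-t-2\wt(\bx))\ket{\bx}$ already established in the proof of Theorem \ref{thm:main-result}.

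First I would compute $\tr(\rho H) = \sum_{\bz}\pz \bra{\psiz} H \ket{\psiz}$. Since $H$ is diagonal in the computational basis, $\bra{\psiz}H\ket{\psiz}$ is simply the average of $n-t-2\wt(\bx)$ over $\bx \in C_{\bz}[E]$, which by the definition \eqref{eq:weightenum-prob} of the weight random variable equals $(n-t) - 2\mathbb E({\mathcal X}_{C,\bz}[E])$. Summing against $\pz$ and using $\sum_\bz \pz = 1$ (the sets $C_{\bz,E}$ partition $C$) yields $\tr(\rho H) = (n-t) - 2\sum_\bz \pz \mathbb E({\mathcal X}_{C,\bz}[E])$. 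In exactly the same way, $\bra{\psiz}H^2\ket{\psiz}$ is the average of $(n-t-2\wt(\bx))^2$, equal to $(n-t)^2 - 4(n-t)\mathbb E({\mathcal X}_{C,\bz}[E]) + 4\mathbb E({\mathcal X}_{C,\bz}[E]^2)$, so that $\tr(\rho H^2) = (n-t)^2 - 4(n-t)\sum_\bz \pz \mathbb E({\mathcal X}_{C,\bz}[E]) + 4\sum_\bz \pz \mathbb E({\mathcal X}_{C,\bz}[E]^2)$.

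The final step is the substitution into $4\tr(\rho H^2) - 4\tr(\rho H)^2$. Writing $\bar X := \sum_\bz \pz \mathbb E({\mathcal X}_{C,\bz}[E])$, the expansion $\tr(\rho H)^2 = (n-t)^2 - 4(n-t)\bar X + 4\bar X^2$ cancels exactly the $(n-t)^2$ and $-4(n-t)\bar X$ contributions arising from $\tr(\rho H^2)$, leaving precisely $16\sum_\bz \pz \mathbb E({\mathcal X}_{C,\bz}[E]^2) - 16\bar X^2$, which is the asserted bound. (Equivalently, the right-hand side is $16$ times the variance of the codeword weight under the full mixture distribution that first draws $\bz$ with probability $\pz$ and then a weight from $C_\bz[E]$, i.e. $4\Var_\rho(H)$.)

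I do not expect a genuine obstacle in this direction. Unlike the lower bound of Theorem \ref{thm:main-result}, whose proof required $\rho^2$ and therefore relied on the orthogonality guaranteed by $t$-disjointness, the variance bound \eqref{eq:qfi-upperbound-variance} depends only on the diagonal matrix elements $\bra{\bx}\rho\ket{\bx}$ in the computational basis. These are read off directly from the mixture form of Proposition \ref{prop:erasure-t-qubits} with no cancellation occurring between distinct $\bz$, so the $t$-disjointness hypothesis is in fact not used for this upper bound and is retained only to parallel the lower-bound statement. The one thing to watch is the bookkeeping of the first- and second-moment terms so that the cross terms cancel cleanly, and the substitution above is arranged precisely to make that cancellation transparent.
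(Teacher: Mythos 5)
Your proposal is correct and follows essentially the same route as the paper's own proof: both apply the generator variance bound \eqref{eq:qfi-upperbound-variance} to $\rho = \rho_C[E]$, exploit the diagonal action $H\ket{\bx} = (n-t-2\wt(\bx))\ket{\bx}$ to reduce $\tr(\rho H)$ and $\tr(\rho H^2)$ to first and second moments of the weight random variables ${\mathcal X}_{C,\bz}[E]$, and cancel the $(n-t)^2$ and $-4(n-t)$ cross terms (the paper merely phrases the moments through the unnormalized enumerator sums $q_\bz$ and $r_\bz$ of \eqref{eq:qp-defi} and \eqref{eq:rp-defi} before converting to probabilities). Your side observation is also accurate: since the sets $C_{\bz,E}$ partition $C$ and the bound needs only the diagonal of $\rho$ rather than a spectral decomposition or $\rho^2$, the $t$-disjointness hypothesis plays no role here, in contrast to the lower bound of Theorem \ref{thm:main-result}.
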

\begin{proof}
To simplify notation, let $\rho = \rho_C[E]$. For every $(n-1)$-bit string $\bx$, we have
\begin{align}
H |\bx\> &= ( -\wt(\bx) + (n-1-\wt(\bx) )|\bx\> 
\notag\\
&= (n-1 - 2\wt(\bx))|\bx\>.
\end{align}
The cyclic property of the trace implies that $\tr(\rho H^2) = \tr(H\rho H)$. 
Using the form of $\rho$ from Proposition \ref{prop:erasure-t-qubits} we get
\begin{align}
 \tr(\rho H^2) 
=&
\frac{1}{|C|}
\sum_{{\bf z} \in \{0,1\}^t}
\sum_{\bx,\by,\bu \in C_{\bf z}[E]}
\<\bu |H|\bx\>\<\by|H|\bu\> .
\end{align}
Since $H$ is a diagonal operator in the computational basis,
it follows that
\begin{align}
\tr(\rho H^2) 
=&
\frac{1}{|C|}
\sum_{{\bf z} \in \{0,1\}^t} 
\sum_{\bx  \in C_{\bf z}[E]}
\<\bx |H|\bx\>\<\bx|H|\bx\> .
\end{align}
Since $\<\bx|H|\bx\> = (n-t-2\wt(\bx))$,
we establish a connection between $\tr(\rho H^2)$ and weight enumerators through the equation
\begin{align}
\tr(\rho H^2)
=
\frac{1}{|C|} \sum_{{\bf z} \in \{0,1\}^t}  q_{\bf z},   \label{eq:piece1-upper}
\end{align}
where $q_{\bf z}$ is as given in \eqref{eq:qp-defi}.
Similarly, we can see that 
\begin{align}
\tr(\rho H) =&
\frac{1}{|C|}
\sum_{{\bf z} \in \{0,1\}^t} 
\sum_{\bx \in C_{\bf z}[E]}
\<\bx |H|\bx\>.
\end{align}
It follows that
\begin{align}
\tr(\rho H) 
&=
\frac{1}{|C|}
\sum_{{\bf z} \in \{0,1\}^t} 
\sum_{\bx \in C_{\bf z}[E]}
(n-t-2\wt(\bx))
\notag\\
&=
\frac{1}{|C|}
\sum_{{\bf z} \in \{0,1\}^t} 
r_{\bf z},\label{eq:piece2-upper}
\end{align}
where $r_{\bf z}$ is as given in \eqref{eq:rp-defi}.
Since we have the generator bound 
$Q_E(C) \le 
4\tr(\rho H^2 ) - 4\tr(\rho H)^2$,
we can use \eqref{eq:piece1-upper} and \eqref{eq:piece2-upper} to get
\begin{align}
Q_E(C) 
&\le 
\frac{1}{|C|}
\left( \sum_{{\bf z} \in \{0,1\}^t}  
 q_{\bf z} 
- 
\frac{1}{|C|}
 \sum_{{\bf y},{\bf z} \in \{0,1\}^t}  
r_{\bf y} r_{\bf z} \right) .\label{eq:final-upper-bound}
\end{align}
Now note that   
\begin{widetext}
\begin{align}
\sum_{{\bf z} \in \{0,1\}^t}  q_{\bf z} 
   & = (n-t)^2 \sum_{{\bf z} \in \{0,1\}^t} |C_{\bf z}[E]| - 4 (n-t) \sum_{{\bf z} \in \{0,1\}^t} |C_{\bf z}[E]| \, \mathbb E({\mathcal X}_{C,{\bf z}}[E] ) + 4 \sum_{{\bf z} \in \{0,1\}^t} |C_{\bf z}[E]| \, \mathbb E({\mathcal X}_{C,{\bf z}}[E]^2 ) \\
   & = (n-t)^2 |C| - 4 (n-t) \sum_{{\bf z} \in \{0,1\}^t} |C_{\bf z}[E]| \, \mathbb E({\mathcal X}_{C,{\bf z}}[E] ) + 4 \sum_{{\bf z} \in \{0,1\}^t} |C_{\bf z}[E]| \, \mathbb E({\mathcal X}_{C,{\bf z}}[E]^2 ),\label{eq:38}
\end{align}
and
\begin{align}
\frac{1}{|C|} \left( \sum_{{\bf z} \in \{0,1\}^t} r_{\bf z} \right)^2 
   & = \frac{1}{|C|} \left( (n-t) \sum_{{\bf z} \in \{0,1\}^t} |C_{\bf z}[E]| - 2 \sum_{{\bf z} \in \{0,1\}^t} |C_{\bf z}[E]| \, \mathbb E({\mathcal X}_{C,{\bf z}}[E] )  \right)^2 \\
   & = \frac{1}{|C|} \left( (n-t) |C| - 2 \sum_{{\bf z} \in \{0,1\}^t} |C_{\bf z}[E]| \, \mathbb E({\mathcal X}_{C,{\bf z}}[E] )  \right)^2 \\
   & = (n-t)^2 |C| - 4 (n-t) \sum_{{\bf z} \in \{0,1\}^t} |C_{\bf z}[E]| \, \mathbb E({\mathcal X}_{C,{\bf z}}[E] ) + \frac{4}{|C|} \left( \sum_{{\bf z} \in \{0,1\}^t} |C_{\bf z}[E]| \, \mathbb E({\mathcal X}_{C,{\bf z}}[E] ) \right)^2 . 
   \label{eq:tolowerbound} 
\end{align}
\end{widetext}
The result then follows.
\end{proof} 

\begin{remark}
\label{rem:simple-upper-bound}
We can obtain a simpler, albeit looser, upper bound on the QFI that is expressed explicitly in terms of the variances of the weight distributions of the shortened codes. To arrive at a simpler bound, aside from the assumptions made in Theorem \ref{thm:main-result-upperbound}, we suppose that we additionally have 
\begin{align}
 \sum_{{\bf z} \in \{0,1\}^t} p_{\bf z}\,  \mathbb E({\mathcal X}_{C,{\bf z}}[E]) \ge s,
\end{align}
for some $s \ge 1$. 
Then 
\begin{align}
Q_E(C) \le& 16 \left( \frac{s}{n} \right) \sum_{{\bf z} \in \{0,1\}^t} p_{\bf z}\, \var({\mathcal X}_{C,{\bf z}}[E] )
\notag\\
&+
16 \left( 1- \frac{s}{n} \right) \sum_{{\bf z} \in \{0,1\}^t} p_{\bf z}\, \mathbb E({\mathcal X}_{C,{\bf z}}[E]^2 ).\label{eq:simple-upper-bound}
\end{align}
\end{remark}
\begin{proof}[Proof of \eqref{eq:simple-upper-bound} in Remark \ref{rem:simple-upper-bound}]
Now let us consider an inequality relating 
$\sum_{\bz}    p_\bz x_\bz^2$ and 
$n (\sum_{\bz} p_\bz x_\bz)^2$, where $p_\bz$ are probabilities, and $x_\bz$ are non-negative numbers in the interval $[0,n]$ for some positive number $n$. Suppose further that $\sum_{\bz} p_\bz x_\bz \ge s \ge 1$.
Then it follows that
\begin{align}
\sum_{\bz} p_\bz x_\bz^2 
\le 
n\sum_{\bz} p_\bz x_\bz 
\le 
\left( \frac{n}{s} \right) \left(\sum_{\bz} p_\bz x_\bz \right)^2 . \label{eq:special-inequality}
\end{align}
Using \eqref{eq:special-inequality} and identifying $x_\bz$ with $\mathbb E({\mathcal X}_{C,{\bf z}}[E])$, we find that $
    - \left(\sum_{\bz} p_\bz x_\bz \right)^2
    \le
    \left( \frac{s}{n} \right)
    \sum_{\bz} p_\bz x_\bz^2 .$
Substituting this inequality into Theorem \ref{thm:main-result-upperbound} gives the upper bound \eqref{eq:simple-upper-bound}.
\end{proof} 

Theorem \ref{thm:main-result-upperbound} shows that the QFI upper bound depends on a variance-like quantity on the weight distributions of the $2^t$ shortened codes. 
If $t \le k$ and the submatrix comprising the columns corresponding to $E$ of any generator matrix for $C$ has full rank, then we will indeed have $p_{\bf z} = \frac{1}{2^t}$ by symmetry of binary subspaces.
Furthermore, both the lower and upper bounds on the QFI indicate that we need codes with a large variation in codeword weights.
So, for such codes it is reasonable to expect that $\mathbb E({\mathcal X}_{C,{\bf z}}[E]) \approx \frac{n}{2}$ whenever $t \ll n$, in which case the QFI lower and upper bounds simplify to 
\begin{align} 
\frac{4}{2^{t+1}} \left[ \frac{4}{2^t} \sum_{{\bf z} \in \{0,1\}^t} \mathbb E({\mathcal X}_{C,{\bf z}}[E]^2 ) - n^2 \right] 
\le Q_E(C) 
\notag\\
Q_E(C) 
\le  
4
\left[
    \frac{4}{2^t} \sum_{{\bf z} \in \{0,1\}^t}   \, \mathbb E({\mathcal X}_{C,{\bf z}}[E]^2 ) - n^2
    \right]. \label{II.82}
\end{align} 
We like to emphasize that \eqref{II.82} holds for any positive integer $n$.

Our comparison shows quite clearly that the generator-based QFI lower and upper bounds from the literature are away by a factor $2^{-t-1}$ in our setting of code-inspired probe states.
But, it also explicitly shows that codes with quadratically scaling second moments on the weight distributions of their shortened versions are highly desirable for robust metrology.

\section{An explicit observable} \label{sec:3-measurements}

For us, $\rho = \rho_E(C)$ is not a pure state, but a mixed state with the form 
\begin{align}
\rho =   \sum_{\bz} \pz   |\psiz\>\<\psiz|  .
\end{align}
Unfortunately, the symmetric logarithmic derivative (of $\rho_\theta = U_\theta \rho U_\theta^\dagger$) becomes more complicated in this case. 
We can nonetheless consider the operator 
 \begin{align}
L = i \sum_{\bz} \pz ( |\psiz\>\<\psiz| H - H |\psiz\>\<\psiz|  )
\label{eq:approximate-SLD}
\end{align}
as the observable to measure on
\begin{align}
\rho_\theta = U_\theta \rho U_\theta^\dagger = \sum_{\bz} \pz |\psiz^\theta\>\<\psiz^\theta|,
\label{eq:final-encoded-probe-state}
\end{align}
where $H = Z_1 + \dots + Z_{n-t}$, $|\psiz^\theta\> \coloneqq U_\theta |\psiz\>$, and $U_\theta = e^{-i \theta H}$. 
To evaluate the performance of our observable $L$, we must evaluate the quantities
\begin{align}
\tr(\rho_\theta L ), \quad \frac{\partial}{\partial \theta}\tr(\rho_\theta L )  , \quad \tr(\rho_\theta L^2 ).
\end{align}
These quantities can be calculated using the following lemma.
\begin{lemma}
\label{lem:some-inner-products}
For every $\by \in \{0,1\}^t$, let 
$|\psiy\> = \frac{1}{\sqrt{|C_\by|}} \sum_{\bx \in C_\by}   |\bx\>$, 
where $C_{\by}$ is a binary code of length $n-t$. 
Then 
\begin{widetext}
\begin{align}
\<\psiy^\theta |\psiy \> &= \frac{1}{|C_\by|} e^{i {\theta} (n-t)} \sum_{\bx\in C_\by}
e^{-2i{\theta} \wt(\bx)} , \\
\<\psiy^\theta |H|\psiy \> &= \frac{1}{|C_\by|}
e^{i {\theta} (n-t)}  \sum_{\bx\in C_\by} 
((n-t) - 2 \wt(x)) 
e^{-2i{\theta} \wt(\bx)} ,\\
\<\psiy |H|\psiy \> &=    (n-t)- \frac{2}{|C_\by|} \sum_{\bx\in C_\by}\wt(\bx) , \\
\<\psiy |H^2|\psiy \> &=   (n-t)^2- \frac{4(n-t)}{|C_\by|} \sum_{\bx\in C_\by}\wt(\bx) 
+ \frac{4}{|C_\by|} \sum_{\bx\in C_\by}\wt(\bx)^2 .
\end{align}
\end{widetext}
\end{lemma}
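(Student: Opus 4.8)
The plan is to exploit that both $H = Z_1 + \dots + Z_{n-t}$ and $U_\theta = e^{-i\theta H}$ are diagonal in the computational basis, so that every inner product collapses to a single sum over codewords via orthonormality. As already recorded earlier in the excerpt, $H\ket{\bx} = (n-t-2\wt(\bx))\ket{\bx}$ for each $(n-t)$-bit string $\bx$. Since $U_\theta$ is a function of $H$, it shares these eigenvectors, giving $U_\theta\ket{\bx} = e^{-i\theta(n-t-2\wt(\bx))}\ket{\bx}$. I would therefore first set down
\begin{align}
\ket{\psiy^\theta} &= \frac{1}{\sqrt{|C_\by|}} \sum_{\bx \in C_\by} e^{-i\theta(n-t-2\wt(\bx))}\ket{\bx}, \\
\bra{\psiy^\theta} &= \frac{1}{\sqrt{|C_\by|}} \sum_{\bx \in C_\by} e^{+i\theta(n-t-2\wt(\bx))}\bra{\bx},
\end{align}
noting that the bra carries the conjugated phase $e^{+i\theta(\cdots)}$ because $\bra{\psiy^\theta} = \bra{\psiy}U_\theta^\dagger$ with $U_\theta^\dagger = e^{+i\theta H}$.

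For the first two identities, I would expand $\<\psiy^\theta|\psiy\>$ and $\<\psiy^\theta|H|\psiy\>$ as double sums over $\bx,\bx' \in C_\by$ and invoke orthonormality $\<\bx|\bx'\> = \delta_{\bx,\bx'}$ to reduce each to a single sum over $\bx$. For the second overlap, the eigenrelation $H\ket{\bx'} = (n-t-2\wt(\bx'))\ket{\bx'}$ supplies the extra factor $(n-t-2\wt(\bx))$ before the collapse. In both cases the phase factorizes as $e^{+i\theta(n-t-2\wt(\bx))} = e^{i\theta(n-t)}\,e^{-2i\theta\wt(\bx)}$, and pulling the $\bx$-independent factor $e^{i\theta(n-t)}$ out of the sum yields precisely the two claimed expressions.

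For the last two identities no $U_\theta$ is present, so these are simply $\theta$-independent expectations of $H$ and $H^2$ in the state $\ket{\psiy}$. The third identity is the $\theta\to 0$ specialization of the collapsed sum appearing in the second identity, which I would split as $\frac{1}{|C_\by|}\sum_{\bx}(n-t) - \frac{2}{|C_\by|}\sum_{\bx}\wt(\bx) = (n-t) - \frac{2}{|C_\by|}\sum_{\bx}\wt(\bx)$. The fourth identity uses the eigenvalue of $H^2$, namely $(n-t-2\wt(\bx))^2$; I would expand this binomial as $(n-t)^2 - 4(n-t)\wt(\bx) + 4\wt(\bx)^2$ and distribute the average over its three terms to recover the stated formula.

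There is no substantive obstacle here: each identity reduces to a diagonal-operator evaluation followed by an orthonormal collapse. The only point that requires care is the bookkeeping of signs and the separation of the global phase $e^{i\theta(n-t)}$ from the weight-dependent phase $e^{-2i\theta\wt(\bx)}$, together with keeping the conjugation consistent between the bra $\bra{\psiy^\theta}$ and the ket. These steps are entirely routine and introduce no real difficulty.
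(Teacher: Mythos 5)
Your proposal is correct and follows essentially the same route as the paper, which simply records the eigenvalue relations $H\ket{\bx} = (n-t-2\wt(\bx))\ket{\bx}$, $H^2\ket{\bx} = ((n-t)-2\wt(\bx))^2\ket{\bx}$, and $e^{-i\theta H}\ket{\bx} = e^{-i\theta((n-t)-2\wt(\bx))}\ket{\bx}$ and states that the identities follow directly. Your write-up merely fills in the routine intermediate steps (orthonormal collapse, conjugation on the bra, factoring out $e^{i\theta(n-t)}$) that the paper leaves implicit.
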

\begin{proof}
It is easy to see that 
\begin{align}
H|\bx\> &= (n-t)-2\wt(\bx)|\bx\>,\\
H^2|\bx\> &= ((n-t)-2\wt(\bx))^2|\bx\>,\\
e^{-i\theta H}|\bx\> &= \exp[-i {\theta} ((n-t)-2\wt(\bx))]|\bx\>.
\end{align}
The results then directly follow from these observations.  
\end{proof} 
Now define
\begin{align}
\mu_\by & \coloneqq \frac{1}{|C_\by|} \sum_{\bx\in C_\by} \wt(\bx), \\
V_\by & \coloneqq \frac{1}{|C_\by|} \sum_{\bv \in C_\by} \left( \wt(\bv)^2 \right) - \mu_\by^2.
\end{align}
Here, $V_\by$ denotes the variance of the weight distribution of $C_\by$. 
Then we have the following lemmas.
\begin{lemma}
The unitarily evolved probe state and its associated symmetric logarithmic derivative satisfy
\begin{align}
\tr(\rho_\theta L)  &=
8\theta \sum_{\by} \py^2 V_\by +O(\theta^3) \label{eq:bias}
\\
\frac{\partial \rho_\theta}{\partial \theta}\tr(\rho_\theta L)  &= 
8 \sum_{\by} \py^2 V_\by  +O(\theta^2) .
 \end{align}
\end{lemma}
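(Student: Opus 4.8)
The plan is to substitute the definitions \eqref{eq:approximate-SLD} of $L$ and \eqref{eq:final-encoded-probe-state} of $\rho_\theta$ into $\tr(\rho_\theta L)$ and expand by linearity, producing a double sum over $\by,\bz \in \{0,1\}^t$ whose summand is assembled from the overlaps $\<\psiy^\theta|\psiz\>$, $\<\psiz|\psiy^\theta\>$, $\<\psiz|H|\psiy^\theta\>$, and $\<\psiy^\theta|H|\psiz\>$. The first and decisive move is to kill every cross term with $\by\neq\bz$. Since $C$ is $t$-disjoint with respect to $E$, the supports $C_\by[E]$ and $C_\bz[E]$ are disjoint subsets of $\{0,1\}^{n-t}$, so the bare states satisfy $\<\psiy|\psiz\> = \delta_{\by,\bz}$. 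The key point is that both $H$ and $U_\theta = e^{-i\theta H}$ are diagonal in the computational basis and hence cannot transport amplitude between disjoint supports; consequently the four overlaps above all vanish for $\by\neq\bz$. Combining the two surviving diagonal terms with the identity $z-z^*=2i\,\mathrm{Im}(z)$ and using the Hermiticity of $H$ collapses the expression to the single sum
\begin{align}
\tr(\rho_\theta L) = -2\sum_{\by} \py^2 \, \mathrm{Im}\!\left( \<\psiy|H|\psiy^\theta\>\<\psiy^\theta|\psiy\> \right).\notag
\end{align}

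With the cross terms eliminated, the second step is a Taylor expansion in $\theta$ of the diagonal overlaps, both of which are supplied by Lemma~\ref{lem:some-inner-products} applied with $C_\by = C_\by[E]$. I would factor the common phase $e^{i\theta(n-t)}$ out of $\<\psiy^\theta|\psiy\>$ and $\<\psiy^\theta|H|\psiy\>$ and observe that it cancels in the product $\<\psiy|H|\psiy^\theta\>\<\psiy^\theta|\psiy\>$, leaving only the phase-free averages $\frac{1}{|C_\by|}\sum_{\bx} e^{-2i\theta\wt(\bx)}$ and its weighted counterpart. Expanding $e^{-2i\theta\wt(\bx)} = 1 - 2i\theta\wt(\bx) + O(\theta^2)$ and averaging over $C_\by[E]$ reintroduces the moments $\mu_\by$ and $\frac{1}{|C_\by|}\sum_\bx\wt(\bx)^2 = V_\by + \mu_\by^2$. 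Collecting the imaginary part to first order in $\theta$, the contributions proportional to $\mu_\by(n-t)$ cancel and the coefficient reduces cleanly to $-4V_\by$, yielding $8\theta\sum_\by\py^2 V_\by$ for $\tr(\rho_\theta L)$.

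To bound the remainder without expanding further, I would use a parity argument. Because the weights $\wt(\bx)$ and the coefficients $(n-t)-2\wt(\bx)$ are real, the substitution $\theta\mapsto-\theta$ sends each overlap to its complex conjugate, so $\<\psiy|H|\psiy^\theta\>\<\psiy^\theta|\psiy\>$ maps to its conjugate and its imaginary part is therefore an \emph{odd} function of $\theta$. Hence the first correction beyond the linear term is $O(\theta^3)$, as claimed, establishing \eqref{eq:bias}. The second identity is then immediate: differentiating $\tr(\rho_\theta L) = 8\theta\sum_\by\py^2 V_\by + O(\theta^3)$ with respect to $\theta$ gives $8\sum_\by\py^2 V_\by + O(\theta^2)$ (here the left-hand side of the displayed second equation should read $\tfrac{\partial}{\partial\theta}\tr(\rho_\theta L)$).

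The main obstacle is the first step: a priori the disjointness hypothesis only guarantees orthogonality of the bare states $|\psiy\>$, and one must verify that this orthogonality survives the insertion of $H$ and of $U_\theta$. The clean resolution is precisely that both operators are diagonal in the computational basis and so preserve disjointness of supports; were $C$ not $t$-disjoint with respect to $E$, the cross terms would generically contribute and the simple single-sum form would break down. Everything that follows — the phase cancellation, the first-order expansion, and the parity bound on the remainder — is routine once the double sum has been reduced to its diagonal.
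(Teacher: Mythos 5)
Your proposal is correct and follows essentially the same route as the paper's proof: reduce the double sum over $\by,\bz$ to its diagonal using $t$-disjointness (both $H$ and $U_\theta$ are diagonal in the computational basis, so orthogonality of disjoint supports survives their insertion), apply Lemma \ref{lem:some-inner-products}, use $z - z^{*} = 2i\,\mathrm{Im}(z)$, and Taylor-expand to extract $8\theta \sum_{\by} \py^2 V_\by$, with your parity argument for the $O(\theta^3)$ remainder being a harmless variant of the paper's direct expansion of $\sin\bigl(2\theta(\wt(\bv)-\wt(\bx))\bigr)$. In fact your constants are the ones consistent with the definition \eqref{eq:approximate-SLD} of $L$ and with the lemma's statement, whereas the paper's displayed intermediate steps carry a stray factor of $2$ (a $2i$ prefactor apparently borrowed from the pure-state SLD \eqref{eq:perfect-SLD}, which lands them on $16\theta$ before the statement's $8\theta$), and you also correctly flag that the left-hand side of the second display should read $\frac{\partial}{\partial\theta}\tr(\rho_\theta L)$ rather than $\frac{\partial \rho_\theta}{\partial \theta}\tr(\rho_\theta L)$.
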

\begin{proof}
Since $H$ is a diagonal matrix and we expand $|\psiy\>$ in the computational basis, we have 
\begin{align}
&\tr(\rho_\theta L) 
\notag\\
=&
2i \sum_{\bu,\by} \pu \py \tr\bigg[ 
	|\psiu^\theta\>\<\psiu^\theta| \
	\bigg( |\psiy\>\<\psiy|H-H|\psiy\>\<\psiy| \bigg)
\bigg]\notag\\
=&
2i \sum_{\by} \py^2  \bigg[ 
	\<\psiy|H|\psiy^\theta\>\<\psiy^\theta|   \psiy\>
-		\<\psiy| \psiy^\theta\>\<\psiy^\theta|H|\psiy\>
\bigg].
\end{align}
Now by interpreting $ \<\psiy|H|\psiy^\theta\> \<\psiy^\theta|\psiy\> $ as a complex number $z$, and noting that $z-z^*=2i{\rm Im}(z)$, we can use the results of Lemma \ref{lem:some-inner-products} to get
\begin{widetext}
\begin{align}
\tr(\rho_\theta L) 
&=
-4 \sum_{\by} \frac{\py^2}{|C_\by|^2} {\rm Im}\left[ 
  \sum_{\bx\in C_\by}e^{-2i{\theta} \wt(\bx)} 
  \sum_{\bv\in C_\by}  
((n-t) - 2 \wt(\bv)) 
e^{2i{\theta} \wt(\bv)}  
\right] \notag\\
&=
-4 \sum_{\by} \frac{\py^2}{ |C_\by| ^2 }  
  \sum_{\bx,\bv\in C_\by}
  (  n-t - 2 \wt(\bv)  ) 
 \bigg[
  \cos(2{\theta} \wt(\bx))  
 \sin(2{\theta} \wt(\bv))  
 -
 \sin (2{\theta}\wt(\bx) )  
\cos(2{\theta} \wt(\bv))  
\bigg]   \notag\\
&=
-4 \sum_{\by} \frac{\py^2}{ |C_\by| ^2 }  
  \sum_{\bx,\bv\in C_\by}
  (  n-t - 2 \wt(\bv)  ) 
 \sin(2{\theta}(\wt(\bv) - \wt(\bx)  )    .
\end{align}
It follows that
\begin{align} 
\tr(\rho_\theta L) 
&= 0
-8\theta \sum_{\by} \frac{\py^2}{ |C_\by| ^2 }  
  \sum_{\bx,\bv\in C_\by}
  (  n-t - 2 \wt(\bv)  ) 
(\wt(\bv) - \wt(\bx)  )    + O(\theta^3)
\notag\\
&=
16\theta \sum_{\by} \frac{\py^2}{ |C_\by| ^2 }  
  \sum_{\bx,\bv\in C_\by}
  (   \wt(\bv)^2 - \wt(\bv)\wt(\bx)  )    + O(\theta^3)\notag\\
&=
 16\theta \sum_{\by} \py^2  V_\by    + O(\theta^3) 
\end{align}
\end{widetext}
Differentiating with respect to $\theta$, we get 
\begin{align}
\frac{\partial}{\partial \theta}
\tr(\rho_\theta L) 
 &=
 8 \sum_{\by} \py^2 V_\by  +O(\theta^2) .  \hfill \qedhere
\end{align}
\end{proof}
Similarly we prove another property. 
\begin{lemma}
The unitarily evolved probe state and its associated symmetric logarithmic derivative satisfy
\begin{align} 
\tr(\rho_\theta L^2) 
	&= 
16 \sum_{ \by }  \py^3 V_\by + O(\theta^2),
\end{align}\end{lemma}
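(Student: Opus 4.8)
The plan is to compute $\tr(\rho_\theta L^2)$ directly, exploiting the commutator structure of $L$ together with the $t$-disjointness of $C$. Writing $P_\bz \coloneqq |\psiz\>\<\psiz|$, I would first express $L = i\sum_\bz \pz\,[P_\bz, H]$ and square it to obtain
\begin{align}
L^2 = -\sum_{\by,\bz} \py\pz\,[P_\by,H]\,[P_\bz,H],
\end{align}
so that inserting $\rho_\theta = \sum_\bu \pu\,|\psiu^\theta\>\<\psiu^\theta|$ and taking the trace yields a triple sum over $\bu,\by,\bz \in \{0,1\}^t$ of scalars of the form $\<\psiu^\theta|\,\cdots\,|\psiu^\theta\>$. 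Expanding each product $[P_\by,H][P_\bz,H]$ gives four operator terms (the analogue of the $AC-AD-BC+BD$ expansion used elsewhere in the excerpt).

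The decisive simplification comes from the $t$-disjointness hypothesis. Since the shortened codes $C_\bz[E]$ are pairwise disjoint, the vectors $\{|\psiz\>\}$ are orthonormal, and because both $H$ and $U_\theta = e^{-i\theta H}$ are diagonal in the computational basis they preserve the support of each $|\psiz\>$. Hence every cross inner product $\<\psiy|\psiz\>$, $\<\psiy|H|\psiz\>$, and $\<\psiu^\theta|\psiz\>$ vanishes whenever the indices differ. I would use this to force $\bu=\by=\bz$ in each of the four terms, collapsing the triple sum to a single sum over $\by$ weighted by $\pu\py\pz = \py^3$. Each surviving term is then a product of the scalars $\<\psiy^\theta|\psiy\>$, $\<\psiy^\theta|H|\psiy\>$, $\<\psiy|H|\psiy\>$, and $\<\psiy|H^2|\psiy\>$, all supplied explicitly by Lemma~\ref{lem:some-inner-products}. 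At $\theta=0$ the four terms combine into the single-state variance $\<\psiy|H^2|\psiy\> - \<\psiy|H|\psiy\>^2$, which, using the formulas of Lemma~\ref{lem:some-inner-products} together with the definitions of $\mu_\by$ and $V_\by$, equals $4V_\by$; this is the identity that furnishes the $V_\by$ dependence, the overall combinatorial prefactor being fixed by assembling the four collapsed terms.

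The main obstacle is controlling the $\theta$-dependence well enough to confirm the error is genuinely $O(\theta^2)$ rather than $O(\theta)$. For this I would factor the common phase $e^{i\theta(n-t)}$ out of $\<\psiy^\theta|\psiy\>$ and $\<\psiy^\theta|H|\psiy\>$, observing that it cancels in every modulus and Hermitian product that appears, and then Taylor-expand $e^{-2i\theta\wt(\bx)}$ to second order. The key point to verify is that the linear-in-$\theta$ contribution cancels: the first-order coefficients of $\<\psiy^\theta|\psiy\>$ and $\<\psiy^\theta|H|\psiy\>$ turn out to be purely imaginary and enter only through real parts and squared moduli, so they drop out, leaving the variance term plus an $O(\theta^2)$ remainder. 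Besides this cancellation, the remaining difficulty is purely clerical sign-bookkeeping across the four terms, and the already-established expansion of $\tr(\rho_\theta L)$ provides a convenient consistency check on the constants.
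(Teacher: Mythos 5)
Your proposal is correct and follows essentially the same route as the paper: the $t$-disjointness (term-wise orthogonality of the $|\psi_{\bf z}\>$ under the diagonal operators $H$ and $U_\theta$) collapses the triple sum to $\sum_{\bf y} p_{\bf y}^3$, the four-term commutator expansion reduces to $2\,{\rm Re}\bigl( \<\psi_{\bf y}^\theta|\psi_{\bf y}\>\<\psi_{\bf y}|H|\psi_{\bf y}\>\<\psi_{\bf y}|H|\psi_{\bf y}^\theta\> \bigr) - |\<\psi_{\bf y}^\theta|\psi_{\bf y}\>|^2\<\psi_{\bf y}|H^2|\psi_{\bf y}\> - |\<\psi_{\bf y}^\theta|H|\psi_{\bf y}\>|^2$, and Lemma~\ref{lem:some-inner-products} yields the variance $4V_{\bf y}$ per term. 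Your verification that the linear-in-$\theta$ terms cancel (purely imaginary first-order coefficients entering only through real parts and squared moduli) is the same cancellation the paper obtains by combining phases into $\cos\bigl(2\theta(\wt({\bf u})-\wt({\bf x}))\bigr)$, an even function of $\theta$.
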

\begin{proof}
Next, using the assumption that $|\psiy\>$ and $|\psiz\>$ are term-wise orthogonal since $C_\by \cap C_\bz = \emptyset$, we find that 
\begin{widetext}
\begin{align}
\tr(\rho_\theta L^2)
&= 
-4 \sum_{\bu,\by,\bz} \pu \py \pz \tr\bigg[ 
	|\psiu^\theta\>\<\psiu^\theta| \ 
	\bigg( |\psiy\>\<\psiy|H-H|\psiy\>\<\psiy| \bigg) \ 
	\bigg( |\psiz\>\<\psiz|H-H|\psiz\>\<\psiz| \bigg)
\bigg]
\notag\\
&= 
-4 \sum_{ \by }  \py^3 \tr\bigg[ 
	|\psiy^\theta\>\<\psiy^\theta| \ 
	\bigg( |\psiy\>\<\psiy|H-H|\psiy\>\<\psiy| \bigg) \ 
	\bigg( |\psiy\>\<\psiy|H-H|\psiy\>\<\psiy| \bigg)
\bigg]
\notag\\
&= 
-4 \sum_{ \by }  \py^3 
\<\psiy^\theta| \ \bigg[ 
	\bigg( |\psiy\>\<\psiy| H \bigg) \ \bigg( |\psiy\>\<\psiy| H \bigg)
	- \bigg( |\psiy\>\<\psiy| H \bigg) \ \bigg( H |\psiy\>\<\psiy| \bigg)
	\bigg.
	\notag\\
	&\qquad\qquad\qquad\qquad \bigg.
	- \bigg( H |\psiy\>\<\psiy| \bigg) \ \bigg( |\psiy\>\<\psiy| H \bigg)
	+ \bigg( H |\psiy\>\<\psiy| \bigg) \ \bigg( H |\psiy\>\<\psiy| \bigg)
\bigg]
	|\psiy^\theta\>
	\notag\\
	&= 
-4 \sum_{ \by }  \py^3 
 \bigg[ 
\<\psiy^\theta| \psiy\>\<\psiy|H|\psiy\>\<\psiy|H|\psiy^\theta\>
	-\<\psiy^\theta |\psiy\>\<\psiy|H^2|\psiy\>\<\psiy| \psiy^\theta\>
	\bigg.
	\notag\\
	&\qquad\qquad\qquad \bigg.
	-\<\psiy^\theta|H|\psiy\> \<\psiy|H|\psiy^\theta\>
	+\<\psiy^\theta|H|\psiy\>\<\psiy|H|\psiy\>\<\psiy|\psiy^\theta\>
\bigg]
\notag\\
	&= 
-4 \sum_{ \by }  \py^3 
 \bigg[ 
2{\rm Re}\bigg( \<\psiy^\theta| \psiy\>\<\psiy|H|\psiy\>\<\psiy|H|\psiy^\theta\> \bigg)
	-|\<\psiy^\theta |\psiy\>|^2\<\psiy|H^2|\psiy\> 
	-|\<\psiy^\theta|H|\psiy\>|^2
\bigg].
\end{align}
\end{widetext}
Then it follows from Lemma \ref{lem:some-inner-products} that
\begin{widetext}
\begin{align}
&{\rm Re}\bigg( \<\psiy^\theta| \psiy\>\<\psiy|H|\psiy\>\<\psiy|H|\psiy^\theta\> \bigg)
\notag\\
&=
\frac{1}{|C_\by|^2}
{\rm Re}\left(
\sum_{\bu\in C_\by}e^{-2i{\theta}\wt(\bu)}  
(  n-t- 2 \mu_\by)
\sum_{\bx\in C_\by} 
(n-t - 2 \wt(\bx)) e^{2i{\theta} \wt(\bx)}  
\right)
\notag\\
&=
\frac{
(  n-t- 2 \mu_\by)}{|C_\by|^2}
\sum_{\bu, \bx\in C_\by} 
(n-t - 2 \wt(\bx)) 
\bigg[
\cos(2{\theta}\wt(\bu))  
\cos(2{\theta} \wt(\bx) )   
+
\sin(2{\theta} \wt(\bu))
\sin(2{\theta} \wt(\bx))
\bigg]
\notag\\
&=
\frac{
(  n-t- 2 \mu_\by)}{|C_\by|^2}
\sum_{\bu, \bx\in C_\by} 
(n-t - 2 \wt(\bx)) 
\cos(2{\theta} (\wt(\bu) - \wt(\bx) ) .
\end{align}
Thus,
\begin{align}
&
{\rm Re}\bigg( \<\psiy^\theta| \psiy\>\<\psiy|H|\psiy\>\<\psiy|H|\psiy^\theta\> \bigg)
=
(  n-t- 2 \mu_\by)^2 +O(\theta^2).
\end{align}
Next we find that
\begin{align}
|\<\psiy^\theta |\psiy\>|^2\<\psiy|H^2|\psiy\> 
&=
\frac{1}{ |C_\by| ^2} \sum_{\bv,\bu \in C_\by} e^{2i{\theta} \wt(\bv)} e^{-2i {\theta} \wt(\bu)} 
 \left(
   (n-t)^2- 4(n-t)\mu_\by
+ \frac{4}{|C_\by|} \sum_{\bx\in C_\by}\wt(\bx)^2
 \right).
\end{align}
Hence, it follows that
\begin{align}
|\<\psiy^\theta |\psiy\>|^2\<\psiy|H^2|\psiy\> 
&=   
   (n-t)^2- 4(n-t)\mu_\by
+ \frac{4}{|C_\by|} \sum_{\bx\in C_\by}\wt(\bx)^2
+O(\theta^2)
\notag\\
&=   
  (n-t - 2\mu_\by)^2 + 4 V_\by +O(\theta^2).
\end{align}
Also, 
\begin{align}
&|\<\psiy^\theta|H|\psiy\>|^2 =
\left| \frac{1}{|C_\by|} \sum_{\bx\in C_\by} 
((n-t) - 2 \wt(\bx)) \ e^{2i{\theta} \wt(\bx)} \right|^2,
\end{align}
and hence
\begin{align}
|\<\psiy^\theta|H|\psiy\>|^2 
&= |n-t-2\mu_\by  +i(n-t)\theta \mu_\by - \frac{2 i\theta}{|C_\by|} \sum_{\bx \in C_\by}\wt(\bx)^2 |^2
+O(\theta^2)
\notag\\
&=(n-t-2\mu_\by)^2 + O(\theta^2).
\end{align}
\end{widetext}
The result then follows.
\end{proof}
Now we present the main result of this section.
\begin{theorem}\label{thm:SLD-measurement}
Let $\rho = \sum_{\by \in \{0,1\}^t} p_\by |\psiy\>\<\psiy|$, where $|\psiy\>$ satisfy the assumptions of Lemma \ref{lem:some-inner-products}. 
Assume that the length $(n-t)$ (potentially non-linear) codes $C_\by$ satisfy $C_\by \cap C_\bz = \emptyset$ for distinct $\by $ and $\bz$, so that $|\psiy\>$ and $|\psiz\>$ are pairwise orthogonal, term by term. 
Let $\rho_\theta$ and $L$ be as defined in \eqref{eq:final-encoded-probe-state} and \eqref{eq:approximate-SLD} respectively. 
When $\theta$ is small, measuring the observable $L$ gives an estimator $\hat \theta$ of $\theta$ that satisfies the bound
\begin{align}
 \Var(\hat \theta)
&\le
\frac{1 } 
{16 \sum_{\by \in \{0,1\}^t} \py^2 V_\by } + O(\theta^2).
\end{align}
Moreover, when $\theta =0$, $\hat \theta$ is also a locally unbiased estimator.
\end{theorem}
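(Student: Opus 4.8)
The plan is to feed the observable $L$ directly into the error propagation formula \eqref{eq:error-propagation-1-noiseless}. Setting $M = L$ there gives
\begin{align}
\var(\hat\theta) = \frac{\tr(\rho_\theta L^2) - \tr(\rho_\theta L)^2}{\left|\frac{\partial}{\partial\theta}\tr(\rho_\theta L)\right|^2},
\end{align}
which is valid in the small-$\theta$ regime where $\hat\theta \to \theta$. Each of the three moments that appear here, namely $\tr(\rho_\theta L)$, its $\theta$-derivative, and $\tr(\rho_\theta L^2)$, has already been expanded to leading order in $\theta$ by the two lemmas immediately preceding the theorem, whose hypotheses (the term-wise orthogonality coming from $C_\by \cap C_\bz = \emptyset$) are exactly those assumed here. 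So the entire argument reduces to substituting those expansions and tracking powers of $\theta$.

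Concretely, I would first note that $\tr(\rho_\theta L) = O(\theta)$ by \eqref{eq:bias}, so that $\tr(\rho_\theta L)^2 = O(\theta^2)$ drops into the error term and the numerator is governed by $\tr(\rho_\theta L^2)$, whose leading part is proportional to $\sum_\by \py^3 V_\by$. The denominator, being the square of $\frac{\partial}{\partial\theta}\tr(\rho_\theta L) = 16\sum_\by \py^2 V_\by + O(\theta^2)$, has leading part $256\left(\sum_\by \py^2 V_\by\right)^2$. Collecting constants and letting $\theta$ be small yields
\begin{align}
\var(\hat\theta) = \frac{\sum_\by \py^3 V_\by}{16\left(\sum_\by \py^2 V_\by\right)^2} + O(\theta^2).
\end{align}
The one genuinely non-mechanical step is the relaxation converting this exact leading-order expression into the stated bound: since each $\py$ is a probability with $0 \le \py \le 1$ and each weight-variance $V_\by \ge 0$, we have $\py^3 V_\by \le \py^2 V_\by$ term by term, hence $\sum_\by \py^3 V_\by \le \sum_\by \py^2 V_\by$. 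Substituting this cancels one factor of $\sum_\by \py^2 V_\by$ and gives $\var(\hat\theta) \le \frac{1}{16\sum_\by \py^2 V_\by} + O(\theta^2)$.

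For the final claim of local unbiasedness at $\theta = 0$, I would read off from \eqref{eq:bias} that $\tr(\rho_\theta L)$ vanishes at $\theta = 0$; hence the estimator obtained by rescaling the measured value of $L$ by the nonzero normalization $\frac{\partial}{\partial\theta}\tr(\rho_\theta L)$ evaluated at $\theta = 0$ has zero bias there, and that same normalization pins the derivative of its expectation to $1$. I do not anticipate a real obstacle: once the two preceding lemmas are granted, the proof is essentially bookkeeping. The only points demanding care are (i) confirming that the $\tr(\rho_\theta L)^2$ contribution is genuinely $O(\theta^2)$ and so discardable in the small-$\theta$ limit, and (ii) recognizing that $\py^3 V_\by \le \py^2 V_\by$ is precisely the relaxation needed to turn the exact variance into the claimed upper bound.
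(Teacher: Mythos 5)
Your proposal is correct and takes essentially the same route as the paper's proof: both substitute $M = L$ into the error propagation formula, insert the leading-order expansions of $\tr(\rho_\theta L)$, $\frac{\partial}{\partial\theta}\tr(\rho_\theta L)$, and $\tr(\rho_\theta L^2)$ from the two preceding lemmas, and then apply exactly the relaxation $\sum_{\by} \py^3 V_\by \le \sum_{\by} \py^2 V_\by$ (valid since $0 \le \py \le 1$ and $V_\by \ge 0$) to cancel one factor of $\sum_{\by} \py^2 V_\by$, with local unbiasedness at $\theta = 0$ read off from \eqref{eq:bias}, just as the paper does. Your use of $16\sum_{\by}\py^2 V_\by$ for the derivative is the correct normalization, consistent with the $16^2$ in the denominator of the paper's computation (the factor $8$ appearing in the lemma statement is an internal typo in the paper).
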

\begin{proof}
Using the error propagation formula and the previous lemmas, we have that
\begin{align}
\Var(\hat \theta)
=&
\frac{\tr(\rho_{\theta} {L}^2)- \tr(\rho_{\theta} L)^2 }
    {\left|\frac{\partial}{\partial \theta}   \tr(\rho_{\theta} L) \right|^2} +O(\theta^2)  \notag\\
&=\frac{ 16  \sum_{ \by }  \py^3 V_\by  +O(\theta^2)} 
{16^2 \left(\sum_{\by} \py^2 V_\by +O(\theta^2) \right)^2}+O(\theta^2)
\notag\\
&\le
\frac{1 } 
{16   \sum_{\by} \py^2 V_\by } +O(\theta^2).   
\end{align}
The fact that $\hat \theta$ is approximately a locally unbiased estimator at $\theta=0$ arises from \eqref{eq:bias}.
\end{proof} 
Let us now identify $C_\by$ with $C_{\by}[E]$ for each $\by \in \{0,1\}^t$ and thus $\rho$ with $\rho_C[E]$, the classical code-inspired probe state after the subset $E$ of qubits is erased.
Then, we can compare the result of Theorem \ref{thm:SLD-measurement} on $\Var(\hat \theta)$ with respect to the observable $L$ with the lower bound on QFI obtained in Theorem~\ref{thm:main-result}.
Recollecting that the minimum variance is given by the inverse of QFI, we see that measuring the observable $L$ is optimal asymptotically, i.e., whenever $\theta$ is close to zero.
Furthermore, when combined with the boosting lemma, this implies that measuring $L$ on a probe state constructed from a fixed length ($m$) code concatenated with a length $\Omega(n)$ (inner) repetition code will have the desired Heisenberg scaling in $\Var(\hat \theta)$ whenever $\theta$ is close to zero. 
Finally, we note that the above result holds also when $\theta$ approaches integer multiples of $\pi$.

\section{Examples}\label{sec:4-examples}

\subsection{Boosted Reed-Muller codes} \label{sec:4a-Boosted-RM-codes}

Let us revisit the $[[8,3,2]]$ quantum Reed-Muller code based probe state discussed in Section~\ref{sec:2b-RM-codes}.
In this case the classical code corresponding to the probe state is the $[8,4,4]$ self-dual Reed-Muller code $C = \text{RM}(1,3)$.
We observed earlier that under no erasure the state has a QFI lower bound of $16$ but under a single erasure this drops to $7$.
Assume we concatenate $C^{\rm outer} = C$ with an inner repetition code of length $r = 3$ to get a code of length $n = 24$.
According to the boosting lemma (Lemma \ref{lem:boosting}), the QFI lower bound for the probe state constructed from the concatenated code only depends on the projection of the erasures to the outer code $C$.
Since we earlier considered a single erasure on the non-concatenated code, in order to make a fair comparison let us fix the erasure rate as $1/8$.
Thus, approximately three qubits get erased on the $24$-qubit probe state.
If these qubit indices belong to the same ``block'' of repeated bits in the concatenated code, then the projection to $C^{\rm outer} = C$ produces a single qubit erasure.
While this produced a QFI lower bound of $7$ for the non-concatenated code, for the $24$-qubit probe state this is enhanced by $r^2 = 9$ to $63$, according to the boosting lemma. 
Similarly, if the projection produces two (resp. three) erasures on the outer code, then the QFI for the $24$-qubit probe state is $27$ (resp. $9$).
In general, for the outer code $C$, if one, two or three qubits are erased, then the normalized QFI lower bound is $7/8,3/8$, and $1/8$, respectively.
If four or more qubits are erased then the QFI lower bound is trivial (i.e., $0$).
Therefore, when concatenated with a repetition code of length $r$, the normalized bound increases to $7r/8, 3r/8$, and $r/8$, respectively, depending on the size of the projection of the erasures on the concatenated code to just the outer code. 

In Figure \ref{fig:comparisons} we compare the performance of our probe state from the concatenated RM(1,3) code with that of previously studied GNU probe states \cite{ouyang2019robust}. GNU probe states arise from the codespace of a specific family of permutation-invariant quantum error correction codes called GNU codes \cite{ouyang2014permutation}. These codes on $GNU$ qubits have three parameters, given by $G$, $N$ and $U$. Here, $G$ relates to the correctible number of bit-flips, $N$ corresponds to the number of correctible phase-flips, and $U$ is an unimportant scaling factor that is at least 1. These permutation-invariant codes however cannot be studied using the framework in our paper, as the distance of the corresponding classical code is equal to 1.

\begin{figure}[!htb] 
\centering
\includegraphics[scale=0.4]{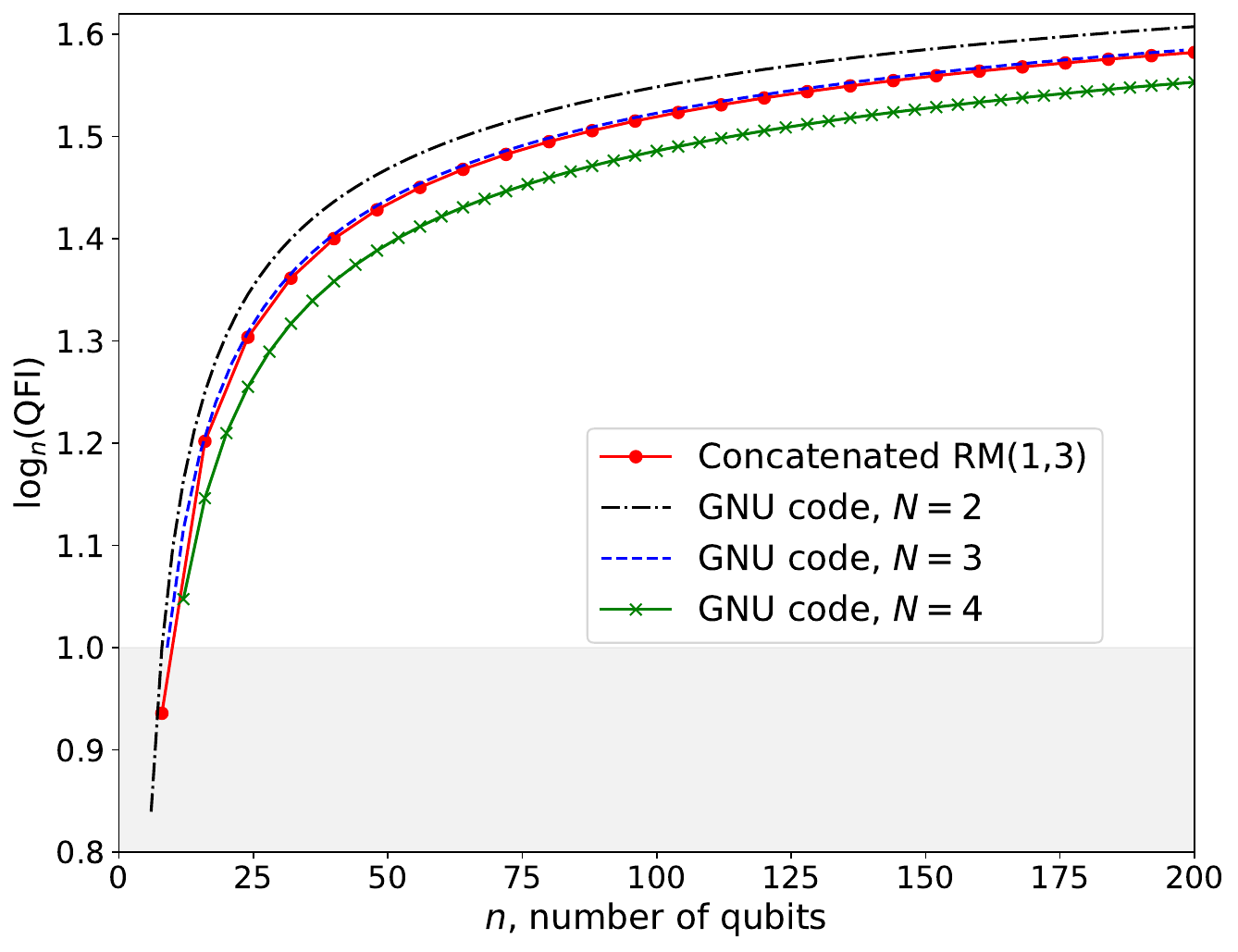}
\caption{We plot lower bounds on $\log_n(\QFI)$ 
for various code-inspired probe states for the robust quantum metrology problem. We compare the lower bounds that we have for the concatenated RM(1,3) code with that of previously studied GNU probe states \cite{ouyang2019robust} after one erasure error has occurred. Whenever the lower bound is above 1, there is a quantum advantage in using these code-inspired probe states.  
}
\label{fig:comparisons}
\end{figure}

\subsection{Boosted CSS codes}
\label{sec:4b-boosted-css}
 
As mentioned earlier, the general code-inspired probe state we have considered is always the logical $\ket{++\cdots +}$ state of a CSS code whose logical $X$ group (including the $X$-type stabilizers) is given by the chosen classical binary code $C$, as long as $C$ is a linear code.
So, if we used $C = \text{RM}(1,3)$ above, then in the future when QEC-based metrology becomes feasible, we will only be able to detect a single error since the corresponding CSS code has parameters $[[8,3,2]]$.
However, if we chose the logical $\ket{+}$ state of the $[[15,1,3]]$ quantum Reed-Muller code, then we can make use of Reed-Muller properties while also being able to correct a single error.
Some properties that could be leveraged are the large symmetry group of (classical) Reed-Muller codes~\cite{Macwilliams-1977} and the fact that this quantum code has a transversal $T$ property~\cite{Bravyi-pra12,Rengaswamy-jsait20}.
Since transversal $T$ realizes logical $T^{-1}$ on this $[[15,1,3]]$ code, it does not take the logical $\ket{+}$ state to a code state that is orthogonal to it, so it remains unclear how this symmetry can be leveraged.
However, since the unitary induced by the generator $H = \sum_i Z_i$ produces a transversal $Z$-rotation, it will be interesting to explore the utility of the transversal $Z$-rotation property.
If this is found to be useful for quantum metrology, then one can easily incorporate well-known families of CSS codes, such as triorthogonal codes~\cite{Bravyi-pra12}, that possess such a property into our code-inspired probe state framework~\cite{Rengaswamy-jsait20,Rengaswamy-isit20}. 

Surface codes form a popular family of CSS codes that are thought to be attractive candidates for quantum error correction in the near-term~\cite{Fowler-arxiv12}.
Although these codes encode a fixed number of qubits, with typical parameters being $[[2d^2,2,d]]$ on a $d \times d$ square lattice, for metrology purposes our results show that only the variances of the weight distributions of the corresponding shortened classical codes matter.
It is known that surface codes can be constructed as a hypergraph product of two classical length $d$ repetition codes~\cite{Krishna-arxiv19a,Krishna-arxiv19b}.
Since repetition codes only have codeword lengths $0$ and $n$, they have a quadratically scaling variance even under erasures.
However, the logical $X$ group for surface codes is not given by a repetition code, so one needs to analyze the weight distribution of this group to assess the utility of the resultant probe state for robust metrology.
As surface codes are highly likely to be practically realized, this approach would naturally be adaptable to fault-tolerant quantum error correction based metrology when that becomes feasible. 
 
 Our scheme has some interesting connections with \cite{dur2014improved}, where a scheme for quantum metrology with active quantum error correction was proposed. There, the probe states were of the form $(|0_L\>^{\otimes m} + |1_L\>^{\otimes m} )/\sqrt 2$, where $|0_L\>$ and $|1_L\>$ are logical codewords from any quantum error correction code. So the concatenation has the repetition code as the outer code, and other quantum error correction codes as inner codes, opposite to the case we considered. 
 
 Another related work is in \cite{lu2015robust}, where the authors derive some conditions for the noise model under which the QFI has absolutely no degradation.
 In contrast, we consider a weaker condition, where the QFI can degrade under the effects of erasure errors. In \cite{lu2015robust}, the authors revisited the metrology problem using the probe states $(|0_L\>^{\otimes m} + |1_L\>^{\otimes m} )/\sqrt 2$, and obtained heuristically the same conclusion as we do.
 Namely, they also find that concatenation of quantum error correction codes with repetition codes is advantageous. 
 In our work however, we have several additional key findings. 
 First, we have explicit bounds for the QFI in this scenario that are absent in \cite{lu2015robust}, which applies to any quantum error correction code concatenated with repetition codes.
 Second, to the best of our knowledge, our work is the first to establish the connections of the problem of robust quantum metrology with that of coding theory.
 

\section{Discussions}
 \label{sec:5-discussions}
 
 In summary, we have studied the performance of code-inspired probe states for the problem of noisy quantum metrology. We find that there is a strong connection between noisy quantum metrology and classical coding theory. Namely, the QFI is related to the variances of the weight distributions of shortened codes. The larger the variance, the larger the corresponding QFI.
 Moreover, we have a boosting lemma that implies that any CSS code, when concatenated with repetition codes of linear length can be useful for robust field-sensing with a constant number of erasure errors%
\footnote{These boosted probe states have a similar form to those proposed in \cite{dur2014improved}, and indeed, our results are reminiscent of those in \cite{lu2015robust}.}. 
 We thereby side-step the no-go result of random codes for robust field-sensing by having these CSS codes to have asymptotically vanishing relative distance.
 We also expect that when the CSS codes are concatenated with repetition codes, we will also do very well for burst erasure errors, but we leave this for future work.

In general, erasure errors do not commute with the signal, and therefore their impact in the different stages of the quantum sensing protocol is different. In our model, we assume that only errors occur during signal accumulation, which models the scenario where the dominant noise process occurs before signal accumulation. When erasures occur during state preparation, one would expect the QFI to degrade more than if the erasures occur later. This is because if $t$ erasure errors occur at the end of signal accumulation just before measurement, for QEC codes that correct at least $t$ errors, then the erasure errors do not decrease the QFI \cite{huang2022imaging}. More recently, the active QEC has been shown to be an effective way to combat erasure errors (and deletion errors \cite{ouyang2021permutation}) that occur during signal accumulation \cite{ouyang2022quantum}.

We like to highlight the distinction between our protocol and the usual QEC setting. In the usual QEC setting, the goal is to minimize the logical error rate using active QEC given some predetermined amount of noise. In our setting, we want to choose the best QEC states to maximize the QFI, without the use of active QEC. In particular, our setting does not require the logical error rate to be low; only the QFI is the metric of merit here. Hence, we like to emphasize that while our probe state is indeed a logical state of some CSS code, the advantage we get after erasures is only in terms of precision in the context of metrology and not in the logical error rate for the code.

 There are many open problems that remain to be solved. 
First, continuous quantum error correction protocols have previously been studied \cite{sarovar2005continuous,muller2014quantum,reiter2017dissipative}.
 It will be interesting to extend our work further in this direction, to see how  continuous time quantum error correction can be integrated with robust quantum metrology.  
Second, the potential of using quantum Reed-Muller codes for fault tolerant quantum metrology has recently been investigated \cite{Kapourniotis2019}. 
Since quantum Reed-Muller codes are CSS codes, it is interesting to see how quantum Reed-Muller codes concatenated with repetition codes would perform correspondingly in a fault-tolerant setting for quantum metrology. 
Third, it will also be interesting to see how concatenation of random codes with specific families of codes with structure will perform for robust quantum metrology, as this will correspondingly extend the work of \cite{PRX-random-states} which studied noisy quantum metrology for fully random quantum states.
Fourth, it will be interesting to extend our results to a multiparameter setting, using recent developments on obtaining tight bounds for the robust estimation of incompatible observables \cite{sidhu2021tight,hayashi2022tight}.

 
\section*{Acknowledgements} \label{sec:ack}

YO acknowledges support from EPSRC (Grant No. EP/M024261/1)
and the QCDA project (Grant No. EP/R043825/1) which has received funding from the QuantERA ERANET Cofund in Quantum Technologies implemented within the European Union’s Horizon 2020 Programme.
YO is supported in part by NUS startup
grants (R-263-000-E32-133 and R-263-000-E32-731), and the
National Research Foundation, Prime Minister’s Office, Singapore and the Ministry of Education, Singapore under the Research Centres of Excellence programme.
The work of NR was supported in part by the National Science Foundation (NSF) under Grant Nos. 1718494 and 1908730.

\bibliography{robustmetro}{}
\bibliographystyle{IEEEtran}

\appendices

\section{Background on Calderbank-Shor-Steane (CSS) Codes}
\label{sec:css_codes}

An $[n,k,d]$ classical binary linear code $C$ is a $k$-dimensional subspace of $\mathbb{F}_2^n$, the vector space of all length-$n$ binary vectors. 
It encodes $k$ message bits, $m$, into a length-$n$ codeword, $c$, through a $k \times n$ generator matrix, $G(C)$, as $c = m G(C)$.
The code has minimum distance $d$, which means that the Hamming weight (i.e., number of non-zero entries) of any codeword is $d$.
The dual code to $C$, denoted $C^\perp$, is the subspace orthogonal to $C$ in $\mathbb{F}_2^n$.

The CSS construction takes as input an $[n,k_1,d_1]$ code $C_1$ and an $[n,k_2,d_2]$ code $C_2$ such that $C_2 \subseteq C_1$, and produces an $[[n, k = k_1-k_2, d \ge \min\{ d_1, d_2^\perp \}]]$ quantum stabilizer code, where $d_2^\perp$ denotes the minimum distance of $C_2^\perp$.
Such a code is said to encode $k$ \emph{logical} qubits into $n$ \emph{physical} qubits.
Each codeword in $C_2$ produces an $X$-stabilizer by mapping $1$s to Pauli $X$s and $0$s to $I$s (identity).
Similarly, each codeword in $C_1^\perp$ produces a $Z$-stabilizer by mapping $1$s to Pauli $Z$s and $0$s to $I$s.
The encoding map for the CSS code is defined as follows.
Given a binary vector $x \in \mathbb{F}_2^k$, which represents the logical basis state $|x\>_L$, the encoded state is given by
\begin{align}
|x\>_L \longmapsto |\psi_x\> \coloneqq \frac{1}{\sqrt{|C_2|}} \sum_{c \in C_2} |x G(C_1/C_2) \oplus c \>,
\end{align}
where $G(C_1/C_2)$ denotes a generator matrix for the quotient space $C_1/C_2$ and $\oplus$ represents modulo $2$ addition of binary vectors.
It can be easily verified that any $X$- or $Z$-stabilizer defined above preserves this state, as required.

\end{document}